\documentclass[pra,floatfix,nofootinbib,twocolumn]{revtex4}

\usepackage{amssymb}
\usepackage{graphicx}
\usepackage{amsmath}
\usepackage{amsfonts}

\newenvironment{proof}[1][Proof]{\noindent\textbf{#1.} }{\ \rule{0.5em}{0.5em}}

\newtheorem{myproposition}{Proposition}
\newtheorem{mylemma}{Lemma}
\newtheorem{mydefinition}{Definition}
\newtheorem{mytheorem}{Theorem}

\newtheorem{mycorollary}{Corollary}

\begin{document}

\title{Classical Ising model test for quantum circuits}
\author{Joseph Geraci}
\affiliation{Department of Mathematics, University of Toronto, Toronto, Ontario M5S 2E4,
Canada}
\affiliation{Center for Quantum Information Science and Technology, University of
Southern California, Los Angeles, CA 90089}
\altaffiliation{Current address: Ontario Cancer Institute, MaRS-TMDT, Toronto, Ontario M5G
1L7, Canada}
\author{Daniel A. Lidar}
\affiliation{Departments of Chemistry, Electrical Engineering, and Physics, Center for
Quantum Information Science and Technology, University of Southern
California, Los Angeles, CA 90089}

\begin{abstract}
We exploit a recently constructed mapping between quantum circuits and
graphs in order to prove that circuits corresponding to certain planar
graphs can be efficiently simulated classically. The proof uses an
expression for the Ising model partition function in terms of quadratically
signed weight enumerators (QWGTs), which are polynomials that arise
naturally in an expansion of quantum circuits in terms of rotations
involving Pauli matrices. We combine this expression with a known efficient
classical algorithm for the Ising partition function of any planar graph in
the absence of an external magnetic field, and the Robertson-Seymour theorem
from graph theory. We give as an example a set of quantum circuits with a
small number of non-nearest neighbor gates which admit an efficient
classical simulation.
\end{abstract}

\maketitle


\section{Introduction}

From its early days quantum computing was perceived as a means to
efficiently simulate physics problem \cite{Feynman1,Lloyd:96}, and a host of
results have been derived along these lines for quantum \cite%
{Wiesner:96,Meyer:97,Boghosian:97a,Abrams:97,Zalka:98,Lidar:98RC,Ortiz:00,Terhal:00,Freedman:00,WuByrdLidar:01,Aspuru-Guzik:05,Cirac:02,Cirac:08}%
, and classical systems \cite%
{Lidar:PRE97a,Yepez:01,Meyer:02,Georgeot:01a,Georgeot:01b,Terraneo:03,Lidar:04,dorit-tutte,JOE,Bravyi:07,aspuru-guzik1,aspuru-guzik2}%
. A\ natural problem relating quantum computation and statistical mechanics
is to understand for which instances quantum computers provide a speedup
over their classical counterparts for the evaluation of partition functions 
\cite{Lidar:PRE97a,Lidar:04}. For the \emph{Potts model}, results obtained
in \cite{dorit-tutte} provide insight into this problem when the evaluation
is an additive approximation. We provided a class of examples for which
there is a quantum speedup when one seeks an exact evaluation of the Potts
partition function \cite{JOE}.

In this work we address the connection between quantum computing and
classical statistical mechanics from the opposite perspective. Namely, we
seek to find restrictions on the power of quantum computing, by employing
known results about efficiently simulatable problems in statistical
mechanics. Specifically, we restrict our attention to the \emph{Ising model}
partition function $Z$, and use a mapping between graph instances of the
Ising model and quantum circuits introduced in \cite{JOE2}, to identify a
certain class of quantum circuits which have an efficient classical
simulation.

Restricted classes of quantum circuits which can be efficiently simulated
classically have been known since the Gottesman-Knill theorem \cite%
{Nielsen:book}. This theorem states that a quantum circuit using only the
following elements can be simulated efficiently on a classical computer:\
(1) preparation of qubits in computational basis states, (2) quantum gates
from the Clifford group (Hadamard, controlled-NOT gates, and Pauli gates),
and (3) measurements in the computational basis. Such \textquotedblleft
stabilizer circuits\textquotedblright\ on $n$ qubits can be be simulated in $%
O(n\log n)$ time using the graph state formalism \cite{anders:022334}. Other
early results include Ref. \cite{valiant}, where the notion of matchgates
was introduced and the problem of efficiently simulating a certain class of
quantum circuits was reduced to the problem of evaluating the Pfaffian. This
was subsequently shown to correspond to a physical model of noninteracting
fermions in one dimension, and extended to noninteracting fermions with
arbitrary pairwise interactions \cite{Knill:01,Terhal:01,Terhal:05} (see
further generalizations in Refs. \cite{Bravyi:05,jozsa-2008}), and
Lie-algebraic generalized mean-field Hamiltonians \cite{somma:190501}.
Criteria for efficient classical simulation of quantum computation can also
be given in terms of upper bounds on the amount of entanglement generated in
the course of the quantum evolution \cite{vidal}.

A result that is more directly related to the one we shall present in this
work is given in Ref. \cite{Bravyi:07}, but within the measurement-based
quantum computation (MQC) paradigm. MQC relies on the preparation of a
multi-qubit entangled resource state known as the cluster state. It is known
that MQC with access to cluster states is universal for quantum computation.
Reference \cite{Bravyi:07} considers \emph{planar code states} which are
closely related to cluster states in that a sequence of Pauli-measurements
applied to the two-dimensional cluster state can result in a planar code
state. MQC with planar code states consists of a sequence of measurements $%
\{M_{1},M_{2},\dots ,M_{n},M\}$ where the $M_{i}$ are one-qubit measurements
and $M$ is a final measurement done on the remaining qubits in some basis
which depends on the results of the $M_{i}$. Reference \cite{Bravyi:07}
demonstrates that planar code states are not a sufficient resource for
universal quantum computation (and can be classically simulated). This fact
is attributed to the exact solvability of the Ising partition function on
planar graphs. Our results complement the work in \cite{Bravyi:07}, as they
are provided in terms of the circuit model, and generalize to Ising model
instances that correspond to graphs which are not necessarily subgraphs of a
two-dimensional grid.

Other conceptually related work uses the connection between graphs and
quantum circuits and the formalism of tensor network contractions, to show
that any polynomial-sized quantum circuit of $1$- and $2$- qubit gates,
which has log depth and in which the $2$-qubit gates are restricted to act
at bounded range, may be classically efficiently simulated \cite%
{Markov:tensor,jozsa-2008,yoran:170503}. A tensor network is a product of
tensors associated with vertices of some graph $G$ such that every edge of $%
G $ represents a summation (contraction) over a matching pair of indexes. We
also use a relationship between quantum circuits and graphs but whose
construction is quite different \cite{JOE2}. Also, Ref. \cite{Bravyi:08}
connects matchgates and tensor network contractions to notions of efficient
simulation.

Finally, other closely related work was recently reported in \cite{Nest:08}
(see also \cite{nest:117207,nest:110501,Cuevas:08}), which addresses the
classical simulatability of quantum circuits. Their results use a connection
to the partition function of spin models, as do we, and they too provide a
mapping between classical spin models and quantum circuits. Specifically
pertinent to our work is the fact that they give criteria for the
simulatability of quantum circuits, using the 2D Ising model. That is,
circuits consisting of single qubit gates of the form $e^{i\theta \sigma
_{x}}$ and nearest-neighbor gates of the form $e^{i\phi \sigma _{z}\otimes
\sigma _{z}}$ are classically efficiently simulable. We shall discuss
how the nearest-neighbor restrictions can be lifted while retaining
efficient classical simulatability.

The structure of this paper is as follows. We begin with a brief review of
the Ising model in Section \ref{sec:Ising}, where we define the Ising
partition function $Z$. In Section \ref{sec2} we review quadratically signed
weight enumerators (QWGT's) and their relationship to quantum circuits, and
review the relationship between QWGT's and $Z$. In Section \ref{sec:mapping}
we introduce an ansatz that allows one to associate graph instances of the
Ising model with circuit instances of the quantum circuit model. In this
section we derive a key result: an explicit connection between the partition
function for the Ising model on a graph, and a matrix element of the unitary
representing a quantum circuit which is related to this graph via the
graph's incidence matrix [Eq.\emph{\ }(\ref{eq:element})]. We then present
our main result in Section \ref{sec:proof}:\ a theorem on efficiently
simulatable quantum circuits. The proof depends on the fact that there are
algorithms for the efficient evaluation of $Z$ for planar instances of the
Ising model. We also discuss the relation to previous work. In Section \ref%
{nextstep} we present a discussion and some suggestions for future work,
including the possibility of a quantum algorithm for the additive
approximation of $Z$. We conclude in Section \ref{sec:conc}. The Appendix
gives a review of pertinent concepts from graph theory, and additional
details, including some proofs.

\section{Ising Spin Model}

\label{sec:Ising}

We briefly introduce the Ising spin model accompanied by some notation and
definitions. Let $G=(E,V)$ be a finite, arbitrary undirected graph with $|E|$
edges and $|V|$ vertices. In the Ising model each vertex $i$ is occupied by
a classical spin $\sigma _{i}=\pm 1$, and each edge $(i,j)\in E$ represents
a bond $J_{ij}$ (interaction energy between spins $i$ and $j$).

\begin{mydefinition}
\label{def:Ising}An \emph{instance} of the Ising problem is the data $\Delta
\equiv (G,\{J_{ij}\})$, i.e., $\Delta $ represents a weighted graph.
\end{mydefinition}

The Hamiltonian of the spin system is 
\begin{equation}
H_{\Delta }(\sigma )=-\sum_{(i,j)\in E}J_{ij}\sigma _{i}\sigma _{j}.
\label{Ham}
\end{equation}%
A spin configuration $\sigma =\{\sigma _{i}\}_{i=1}^{|V|}$ is a particular
assignment of spin values for all $|V|$ spins. A bond with $J_{ij}>0$ is
called ferromagnetic, and a bond with $J_{ij}<0$ is called
antiferromagnetic. The probability of the spin configuration $\sigma $ in
thermal equilibrium for a system in contact with a heat reservoir at
temperature $T$, is given by the Gibbs distribution: $P_{\Delta }(\sigma )={%
\frac{1}{Z_{\Delta }}}W_{\Delta }(\sigma )$, where the Boltzmann weight is $%
W_{\Delta }(\sigma )=\exp [-\beta H_{\Delta }(\sigma )]$, $\beta =1/kT$
is the inverse temperature in energy units, $k$ is the Boltzmann
constant, and $Z_{\Delta }$ is the partition function: 
\begin{equation}
Z_{\Delta }(\beta )\equiv \sum_{\sigma }\exp [-\beta H_{\Delta }(\sigma )].
\label{eq:Z}
\end{equation}%
(Unless there is a risk of confusion we will from now on write $Z$ in
place of $Z_{\Delta }(\beta )$ in order to simplify our notation.)
Computation of the partition function is the canonical problem of
statistical mechanics, since once $Z$ is known one can compute all
thermodynamic quantities, such as the magnetization and heat capacity, by
taking derivatives of $F=-k\log Z$ (the free energy)\ with respect to
appropriate thermodynamic variables \cite{Reichl:book}.

In this work we restrict our attention to the case $J_{ij}\in
\{-J,0,J\}$, with $J>0$, which already gives rise to the full
complexity of spin glass 
models and the associated computational hardness \cite{Parisi:book}. For
example, with the above restriction the problem of computing the partition
function in the three-dimensional spin-glass is NP-hard \cite{Barahona}.%
\footnote{%
A problem is called NP-hard if the existence of a polynomial-time algorithm
for its solution implies the existence of such an algorithm for all
NP-complete problems.}

\section{Quadratically Signed Weight Enumerators and their Relation to the
Ising Partition Function}

\label{sec2}

Quadratically Signed Weight Enumerators (QWGTs) were introduced by Knill and
Laflamme in Ref. \cite{Laflamme}.

\begin{mydefinition}
A Quadratically Signed Weight Enumerator is a bi-variate polynomial of the
form 
\begin{equation}
S(A,B,x,y)=\sum_{b\in \ker A}(-1)^{b^{t}Bb}x^{|b|}y^{n-|b|},  \label{eq:S}
\end{equation}%
where $A$ and $B$ are $0,1$-matrices with $B$ of dimension $n\times n$ and $A
$ of dimension $m\times n$. The variable $b$ in the summand ranges over $0,1$%
-column vectors of dimension $n$ satisfying $Ab=0$ (in the kernel, or
nullspace of $A$), $b^{t}$ is the transpose of $b$, and $|b|$ is the Hamming
weight of $b$ (the number of ones in the vector $b$). All calculations
involving $A,B$ or $b$ are done modulo $2$.
\end{mydefinition}

Note that the evaluation of a QWGT, given that $x$ and $y$ are natural
numbers, is in general $\#$P-hard, since it includes the evaluation of the
weight enumerator polynomial of a classical linear code \cite{Welsh:book}.

\subsection{QWGTs from Quantum Circuits}

\label{sec:QWGT-QC}

We shall now review in some detail how QWGT's were arrived at in Ref. \cite%
{Laflamme} by considering expansions of quantum circuits. Let $\Omega $ be a
quantum circuit formed by a temporal ordering of $N$ gates $g_{k}$, and let $%
U(\Omega )=\prod\nolimits_{k=N}^{1}g_{k}=g_{N}\cdots g_{1}$ be the
corresponding unitary operator. Note that a universal gate set can be
achieved by allowing arbitrary rotations about tensor products of Pauli
operators, i.e., each of the $N$ gates $g_{k}$ can be represented as

\begin{equation}
e^{-i\sigma _{b}\theta /2}=\cos \left( \frac{\theta }{2}\right) I-i\sin
\left( \frac{\theta }{2}\right) \sigma _{b},  \label{eq:G}
\end{equation}%
where 
\begin{equation}
\sigma _{b}=\bigotimes_{i=1}^{n}\sigma _{b_{i}}^{(i)},
\end{equation}%
with $n$ being the number of qubits, such that the Pauli matrices are 
\begin{eqnarray*}
\sigma _{00} &=&I=\left( 
\begin{array}{cc}
1 & 0 \\ 
0 & 1%
\end{array}%
\right) ,\quad \sigma _{01}=\sigma _{X}=\left( 
\begin{array}{cc}
0 & 1 \\ 
1 & 0%
\end{array}%
\right) , \\
\sigma _{11} &=&\sigma _{Y}=\left( 
\begin{array}{cc}
0 & -i \\ 
i & 0%
\end{array}%
\right) ,\quad \sigma _{10}=\sigma _{Z}=\left( 
\begin{array}{cc}
1 & 0 \\ 
0 & -1%
\end{array}%
\right) .
\end{eqnarray*}%
Here $b_{i}\in \{00,01,10,11\}$, $b=\{b_{i}\}_{i=1}^{n}$ is a binary vector
whose length is $2n$, and the superscript $(i)$ represents the qubit which
is operated on by the corresponding Pauli matrix. A circuit constructed
using gates of the form (\ref{eq:G}) may be approximated efficiently to
accuracy $O(\epsilon /N)$ with $\mathrm{polylog}(N/\epsilon )$ overhead
using a standard gate set, such as controlled-NOT\ with single-qubit gates,
and there is a classical algorithm that computes such approximations
efficiently \cite{Laflamme,Kitaev:96}. A universal set of one- and two-qubit
gates can be obtained from $g_{k}$'s as in Eq.~(\ref{eq:G}) from the
rotations with $\cos (\theta /2)=3/5$ (i.e., $\theta =2\arcsin (4/5)$)
around operators of weight at most two (the weight of $\sigma _{b}$ is the
number of non-zero pairs of bits in $b$) (Theorem 3.3, case (e), of \cite%
{Adleman:97}). Letting 
\begin{equation}
\cos \left( \frac{\theta }{2}\right) =\frac{\alpha }{\gamma },\quad \sin
\left( \frac{\theta }{2}\right) =\frac{\alpha ^{\prime }}{\gamma },
\label{eq:theta}
\end{equation}%
so that $\gamma =\sqrt{\alpha ^{2}+\alpha ^{\prime 2}}$, we rewrite Eq.~(\ref%
{eq:G}) as 
\begin{equation}
g_{k}=\frac{1}{\gamma }\left( \alpha I-i\alpha ^{\prime }\sigma
_{b_{k}}\right) .  \label{eq:Gk}
\end{equation}

The gate set is still universal if $U(\Omega )$ is expressed as a product of 
\emph{real} gates \cite{Bernstein:93}, i.e., if each gate $g_{k}$ contains
an odd number of $\sigma _{Y}$'s, so that $i\sigma _{b_{k}}$ in Eq. (\ref%
{eq:Gk}) is a real-valued matrix. Following Ref. \cite{Laflamme}, we
adopt this convention, so that from now on $b_{k}$ is a binary vector of
length $2n$, subject to the restriction that the $b_{k}$ can only contain an
odd number of $11$'s. Moreover, the gate set is still universal if we assume
that the orientation (the sign of $\theta $) is positive if the number of $%
\sigma _{Y}$'s is $1\mathrm{mod}(4)$ and negative otherwise \cite{Laflamme}.
This means that we can replace Eq.~(\ref{eq:Gk}) by%
\begin{equation}
g_{k}=\frac{1}{\gamma }\left( \alpha I\pm i\alpha ^{\prime }\sigma
_{b_{k}}\right) ,
\end{equation}%
with the sign determined by the number of $\sigma _{Y}$'s in $\sigma _{b_{k}}
$. Then, by defining

\begin{equation}
\tilde{\sigma}_{b_{k}}=(-i)^{|b|_{Y}}\sigma _{b_{k}},
\end{equation}%
where $|b|_{Y}$ is the (always odd) number of $\sigma _{Y}$'s occurring in $%
\sigma _{b_{k}}$, we may write 
\begin{equation}
g_{k}=\frac{1}{\gamma }(\alpha I+\alpha ^{\prime }\tilde{\sigma}_{b_{k}}),
\label{gates}
\end{equation}%
which is the desired representation of real-valued gates \cite{Laflamme}.

Now define $C$ to be the block diagonal matrix whose blocks consist of

\begin{equation}
c=\left( 
\begin{array}{cc}
0 & 1 \\ 
0 & 0%
\end{array}%
\right) ,
\end{equation}%
i.e., 
\begin{equation}
C=\bigoplus_{i=1}^{n}c.  \label{eq:C}
\end{equation}%
Then the property that $b$ has an odd number of $11$'s is given by $b^{t}Cb=1
$. In addition, we have the multiplication rule

\begin{equation}
\tilde{\sigma}_{b_{1}}\tilde{\sigma}_{b_{2}}=(-1)^{b_{1}^{t}Cb_{2}}\tilde{%
\sigma}_{b_{1}\oplus b_{2}}  \label{eq:brule}
\end{equation}%
where the addition in the subscript is bit by bit modulo $2$.

Let $H$ be the $(2n\times N)$ matrix whose columns are the $b_{k}$:%
\begin{equation}
H=(b_{1}\text{ }b_{2}\text{ }\cdots \text{ }b_{N}).  \label{eq:H}
\end{equation}%
$H$ is a linear size, bijective representation of the quantum circuit, where
each column represents a gate and every pair of rows represents a qubit.

\begin{mydefinition}
A matrix $H$ which is constructed according to Eq.~(\ref{eq:H}) is called
the \textquotedblleft $H$-matrix representation\textquotedblright\ of the
quantum circuit $\Omega $.
\end{mydefinition}

Using the rule (\ref{eq:brule}) we then have the following expansion \cite%
{Laflamme}:

\begin{eqnarray}
U(\Omega ) &=&\prod_{k=N}^{1}g_{k}  \notag \\
&=&\prod_{k=N}^{1}\frac{1}{\gamma }(\alpha I+\alpha ^{\prime }\tilde{\sigma}%
_{b_{k}})  \notag \\
&=&\frac{1}{\gamma ^{N}}\sum_{a}(-1)^{Q_{aa}}\alpha ^{|a|}(\alpha ^{\prime
})^{N-|a|}\tilde{\sigma}_{Ha},  \label{eq:expand}
\end{eqnarray}%
where 
\begin{equation}
Q_{aa^{\prime }}\equiv a^{t}Qa^{\prime }
\end{equation}%
and where the $N\times N$ lower-triangular matrix $Q$ is defined by%
\begin{equation}
Q\equiv \mathrm{lwtr}(H^{t}CH),  \label{eq:Q}
\end{equation}%
and where $a$ ranges over all binary column vectors of length $N$. (Thus $%
Q_{aa^{\prime }}$ is not a matrix element of $Q$; we use this notation
merely for convenience.)

In order to make contact with the partition function of the Ising model, we
shall be interested in the matrix element $\langle \mathbf{0}\lvert U(\Omega
)\rvert \mathbf{0}\rangle $, where $\rvert \mathbf{0}\rangle =\otimes
_{i=1}^{n}\rvert 0_{i}\rangle $, and $\rvert 0_{i}\rangle $ is the $+1$
eigenvector of $\sigma _{Z}^{(i)}$. For this matrix element to be non-zero
no qubit can be flipped, i.e., $U(\Omega )$ cannot contain any $\sigma _{X}$
or $\sigma _{Y}$ factors. When taking the same matrix element of the
right-hand side of Eq.~(\ref{eq:expand}) we have $\langle \mathbf{0}\lvert 
\tilde{\sigma}_{Ha}\rvert \mathbf{0}\rangle $, and similarly, for this to be
non-zero $\tilde{\sigma}_{Ha}$ cannot have $\sigma _{X}$ or $\sigma _{Y}$
factors. This is enforced by summing only over those binary vectors $a$ such
that $CHa=0$ \cite{Laflamme}. Thus: 
\begin{equation}
\langle \mathbf{0}\lvert U(\Omega )\rvert \mathbf{0}\rangle =\frac{1}{\gamma
^{N}}\sum_{a\in \ker CH}(-1)^{Q_{aa}}\alpha ^{|a|}\alpha ^{\prime N-|a|}.
\label{eq:0U0}
\end{equation}%
A glance at the QWGT expression (\ref{eq:S})\ reveals a striking similarity
to the latter matrix element.

\subsection{Example}

As a simple example meant to illustrate the correspondence between the $H$%
-matrix representation of a quantum circuit $\Omega $ and the actual
operation of the circuit, consider

\begin{equation*}
H=\left[ 
\begin{array}{ccc}
1 & 1 & 1 \\ 
0 & 0 & 1 \\ 
0 & 1 & 1 \\ 
1 & 0 & 0 \\ 
1 & 1 & 1 \\ 
1 & 1 & 0%
\end{array}%
\right]
\end{equation*}%
This matrix represents a circuit $\Omega $ comprising three gates (three
columns) acting on three qubits (two rows per qubit),%
\begin{equation*}
U(\Omega )=g_{3}g_{2}g_{1},
\end{equation*}%
with the following unitaries: 
\begin{eqnarray*}
g_{1} &=&e^{-i\frac{\theta }{2}\sigma _{Z}^{(1)}\sigma _{X}^{(2)}\sigma
_{Y}^{(3)}}=\frac{1}{\gamma }(\alpha I-i\alpha ^{\prime }\sigma
_{Z}^{(1)}\sigma _{X}^{(2)}\sigma _{Y}^{(3)}), \\
g_{2} &=&e^{-i\frac{\theta }{2}\sigma _{Z}^{(1)}\sigma _{Z}^{(2)}\sigma
_{Y}^{(3)}}=\frac{1}{\gamma }(\alpha I-i\alpha ^{\prime }\sigma
_{Z}^{(1)}\sigma _{Z}^{(2)}\sigma _{Y}^{(3)}), \\
g_{3} &=&e^{-i\frac{\theta }{2}\sigma _{Y}^{(1)}\sigma _{Z}^{(2)}\sigma
_{Z}^{(3)}}=\frac{1}{\gamma }(\alpha I-i\alpha ^{\prime }\sigma
_{Y}^{(1)}\sigma _{Z}^{(2)}\sigma _{Z}^{(3)}).
\end{eqnarray*}%
The superscripts represent which qubit is being acted upon and we have
omitted the tensor product symbols. The Pauli operators can be read off from
the corresponding column entries in $H$; thus the entry $(1$ $0)^{t}$ in the
top position of the first column of $H$ represents the $\sigma _{Z}^{(1)}$
Pauli matrix in $g_{1}$, etc.

\subsection{QWGTs and the Ising Partition Function}

In Ref. \cite{Lidar:04} it was shown that the Ising partition function can
be expressed in terms of a QWGT. Let $A$ be the incidence matrix of the
graph $G=(E,V)$, i.e., 
\begin{equation}
A_{v,(i,j)}=\left\{ 
\begin{array}{ll}
1 & \mbox{$(v=i \:\: {\rm and}\:\: (i,j)\in E)$} \\ 
0 & \mbox{${\rm else}$}%
\end{array}%
\right. .
\end{equation}%
Let us associate a binary vector 
\begin{equation}
w=(w_{12},w_{13},\dots )
\end{equation}%
of length $|E|$ with the bond distribution $\{J_{ij}=\pm J\}$, by letting

\begin{equation}
w_{ij}={\frac{1-J_{ij}/J}{2}},
\end{equation}%
so that $w$ specifies whether edge $(i,j)$ supports a ferromagnetic ($%
w_{ij}=0$) or antiferromagnetic ($w_{ij}=1$) bond. Thus we can give an
equivalent definition of an instance of the Ising model (recall Definition %
\ref{def:Ising}) as the data $\Delta \equiv (G,w)$.

Let 
\begin{equation}
\lambda =\tanh (\beta J),  \label{eq:l}
\end{equation}%
and define the $|E|\times |E|$ matrix 
\begin{equation}
B=\mathrm{dg}(w)=\left\{ 
\begin{tabular}{ll}
$w$ & on the diagonal \\ 
$0$ & elsewhere%
\end{tabular}%
\right. .
\end{equation}%
Writing the instance data as $\Delta \equiv (G,w)$ we then have (Theorem 2
of \cite{Lidar:04}): 
\begin{eqnarray}
Z_{\Delta }(\lambda ) &=&\frac{2^{|V|}}{(1-\lambda ^{2})^{|E|/2}}\sum_{a\in
\ker A}(-1)^{a^{t}Ba}\lambda ^{|a|}  \notag \\
&=&\frac{2^{|V|}}{(1-\lambda ^{2})^{|E|/2}}S(A,\mathrm{dg}(w),\lambda ,1),
\label{Z} \\
&=&\frac{2^{|V|}}{(1-\lambda ^{2})^{|E|/2}}\sum_{a\in \ker A}(-1)^{a\cdot
w}\lambda ^{|a|}  \label{eq:Z1}
\end{eqnarray}%
where $a$ in the sums ranges over all $0-1$ vectors of length $|E|$
satisfying $Aa=0$, where $a^{t}Ba=\sum_{i}a_{i}w_{i}a_{i}=a\cdot w$ (since $%
a_{i}=0$ or $1$) was used in the second equality, and where the QWGT\
definition (\ref{eq:S}) was used in the last equality.

This establishes the link between QWGTs and the Ising model partition
function. Because of the similarity to the matrix element $\langle \mathbf{0}%
\lvert U(\Omega )\rvert \mathbf{0}\rangle $, we expect to be able to relate
the partition function to quantum circuits, via QWGTs. We take this up in
the next section.

\begin{mydefinition}
An even subgraph of a graph $G$ (or equivalently an Eulerian subgraph) is
any subgraph of $G$ whose vertices are of even degree. Equivalently, these
are paths in $G$ which begin and end at the same vertex, and which pass
through each edge exactly once. \label{def:even}
\end{mydefinition}

Now, note that the sum in 
\begin{equation}
S(A,\mathrm{dg}(w),\lambda ,1)=\sum_{a\in \ker A}(-1)^{a\cdot w}\lambda
^{|a|}
\end{equation}%
is over vectors that are in the kernel (nullspace) of $A$, which here means
that only subgraphs having an even number of bonds emanating from all
vertices are allowed, i.e., the sum is taken over all even subgraphs or
equivalently, all Eulerian subgraphs.

In this work we will sometimes refer to even or Eulerian subgraphs as cycles.

\section{Connecting the Ising Model Partition Function to the Quantum
Circuit Matrix Element}

\label{sec:mapping}

Our goal in this section is to connect the partition function $Z$ to $%
\langle \mathbf{0}\lvert U(\Omega )\rvert \mathbf{0}\rangle $. To do so we
will use a mapping found and described in detail in Ref. \cite{JOE2}.

\subsection{A circuit ansatz}

\label{sec:ansatz}

Focusing on the representation of the quantum circuit given in Eq.~(\ref%
{eq:expand}) and of the partition function given in Eq.~(\ref{eq:Z1}), we
begin by asking ourselves if there exists some ansatz for the gate set $%
g_{k} $ such that

\begin{equation}
U(\Omega )=\prod_{k=N}^{1}g_{k}\overset{?}{\propto }\sum_{a}(-1)^{a\cdot
w}\lambda ^{|a|}\tilde{\sigma}_{Ha},  \label{eq:U1}
\end{equation}%
where $\lambda =\tanh (\beta J)$. If such a form were possible, the
two representations would be closely linked. Indeed, we can almost get this
form. Let us take as an ansatz%
\begin{equation}
g_{k}=\frac{1}{\sqrt{\lambda ^{2}+1}}(\lambda I+\tilde{\sigma}_{b_{k}}),
\label{ansatz}
\end{equation}%
i.e., the special case of Eq.~(\ref{gates}) with $\lambda =\alpha
^{\prime }/\alpha $, or 
\begin{equation}
\tanh (\beta J)=\tan (\theta /2).  \label{eq:abl}
\end{equation}%
Note that since the inverse temperature $\beta $ and the bond strength $J$
are both positive, Eq.~(\ref{eq:abl}) restricts$\ (\theta /2)\mathrm{mod}%
2\pi $ to be in the range $(0,\pi /2)\cup (\pi ,3\pi /2)$, or $\theta 
\mathrm{mod}4\pi $ to be in the range   
\begin{equation}
R\equiv (0,\pi )\cup (2\pi ,3\pi )  \label{eq:range}
\end{equation}%
(the range for which $\tan (\theta /2)>0$). Fortunately, this includes the
case $\theta =2\arcsin (4/5)\approx 1.85\in R$ (i.e., $\lambda =4/3$),
which, as noted above, allows a universal set of one and two qubit gates to
be obtained.\footnote{%
These observations were used in Ref. \cite{JOE2} to show that finding
additive approximations of the signed generating function of Eulerian
subgraphs over hypergraphs is BQP-complete.} Thus, we have not restricted
the generality of the class of quantum circuits so far. On the other hand,
most $\theta \in R$ do not correspond to universal quantum circuits.

Next, we obtain from Eq.~(\ref{eq:expand}):

\begin{eqnarray}
U(\Omega ) &=&\prod_{k=N}^{1}\frac{1}{\sqrt{\lambda ^{2}+1}}(\lambda I+%
\tilde{\sigma}_{b_{k}})  \notag \\
&=&\frac{1}{(\lambda ^{2}+1)^{N/2}}\sum_{a}(-1)^{Q_{aa}}\lambda ^{|a|}\tilde{%
\sigma}_{Ha}.  \label{U}
\end{eqnarray}%
After taking matrix elements $\langle \mathbf{0}\lvert \cdot \rvert \mathbf{0%
}\rangle $, we have, recalling Eq.~(\ref{eq:0U0}): 
\begin{equation}
\langle \mathbf{0}\lvert U(\Omega )\rvert \mathbf{0}\rangle =\frac{1}{%
(\lambda ^{2}+1)^{N/2}}\sum_{a\in \ker CH}(-1)^{Q_{aa}}\lambda ^{|a|}.
\label{0U0}
\end{equation}%
Comparing Eqs.~(\ref{eq:Z1})\ and (\ref{0U0}), while using $\lambda
=\alpha ^{\prime }/\alpha $, we see that a sufficient condition for them
to be equal, is to identify the incidence matrix $A$ with $CH$ via 
\begin{equation}
\tilde{A}=CH,  \label{A=CH}
\end{equation}%
where $\tilde{A}$ is a matrix containing twice the number of rows of the
incidence matrix $A$, but where each even row is a zero row, and each
consecutive odd row is equal to a row of the original incidence matrix $A$,
and to equate the exponents, i.e., find an edge distribution $w$ which solves%
\begin{equation}
a\cdot w\,\mathrm{mod}\,2=Q_{aa}\quad \forall a\in \ker A  \label{eq:w}
\end{equation}%
where $Q=\mathrm{lwtr}(H^{t}\tilde{A})$ [recall Eq.~(\ref{eq:Q})]. For then

\begin{eqnarray}
\langle \mathbf{0}\lvert U(\Omega )\rvert \mathbf{0}\rangle &=&\frac{1}{%
(\lambda ^{2}+1)^{N/2}}\sum_{a\in \ker A}(-1)^{a\cdot w}\lambda ^{|a|} 
\notag \\
&=&\frac{(1-\lambda ^{2})^{\frac{|E|}{2}}}{(1+\lambda ^{2})^{\frac{|E|}{2}%
}2^{|V|}}Z_{\Delta }(\lambda ).  \label{eq:element}
\end{eqnarray}

\emph{Equation }(\ref{eq:element})\emph{\ is a key result of this paper, as
it establishes the equivalence between quantum circuits and the Ising model,
for bond distributions }$w$\emph{\ that satisfy Eq. }(\ref{eq:w})\emph{, and 
}$\lambda $\emph{'s that satisfy Eq. }(\ref{eq:abl})\emph{. }

It has two consequences. First, if we are able to determine $\langle \mathbf{%
0}\lvert U(\Omega )\rvert \mathbf{0}\rangle $, then we are able to determine
the partition function $Z_{\Delta }(\lambda )$. Note that estimating $%
\langle \mathbf{0}\lvert U(\Omega )\rvert \mathbf{0}\rangle $ in general is
BQP-complete \cite{jordan} and thus something one could do with a universal
quantum computer. Alternatively, if we had a way of classically computing $%
Z_{\Delta }(\lambda )$, then we would be able to classically simulate the
quantum circuit $\Omega $ (if it were solving a decision problem) \cite{z2}.
This latter alternative is the one we focus on in this paper.

\subsection{Circuit-Ising model compatibility}

The connections we established in the previous subsection between quantum
circuits and the partition function imply certain restrictions. We flesh
these out in the present subsection.

First, since we wish to work only with the physically relevant range of
positive temperatures and positive $J$, we restrict the gate angles $\theta $
from now on to lie in $R$. Formally:

\begin{mydefinition}
A gate angle $\theta $ [Eq.~(\ref{eq:theta})] for a gate $g_{k}$ [Eq.~(\ref%
{gates})] is said to be \textquotedblleft $\lambda $-compatible%
\textquotedblright\ if\ $\theta \in R$, where the range $R$ is defined in
Eq.~(\ref{eq:range}).
\end{mydefinition}

Next, we note that Eq.~(\ref{A=CH}) gives rise to a compatibility relation
between circuits and graphs:

\begin{mydefinition}
A quantum circuit $\Omega $, constructed with $\lambda $-compatible angles,
is \textquotedblleft $G$-compatible\textquotedblright\ with a graph $G$ if
the $H$-matrix representation of $\Omega $ satisfies Eq.~(\ref{A=CH}), where 
$A$ is the incidence matrix of $G$, and $C$ is defined in Eq.~(\ref{eq:C}).
\end{mydefinition}

When we take a $G$-compatible circuit and plug its $H$-matrix into Eq.~(\ref%
{eq:w})\ we are not guaranteed that there exists a solution $w$. Hence we
need an appropriate restriction of the class of $G$-compatible circuits:

\begin{mydefinition}
A quantum circuit $\Omega $ is \textquotedblleft $Gw$-compatible%
\textquotedblright\ if it is $G$-compatible and if the solution set of Eq.~(%
\ref{eq:w}) is non-empty.
\end{mydefinition}

We need a similar notion for the bond distributions:

\begin{mydefinition}
A bond distribution $w$ is \textquotedblleft $G\Omega $-compatible%
\textquotedblright\ with a graph $G$ and circuit $\Omega $ if it satisfies
Eq.~(\ref{eq:w}).
\end{mydefinition}

Note that in this last definition the circuit $\Omega $ must be $Gw$%
-compatible, for otherwise we are not guaranteed that the solution set of
Eq.~(\ref{eq:w}) is non-empty. Note further that, as these definitions
imply, Eqs.~(\ref{A=CH})-(\ref{eq:element}) describe a connection between
quantum circuits and instances of the Ising model over given graphs. Namely,
any $H$ which solves Eq.~(\ref{A=CH}) is a matrix representation of a
circuit $\Omega $ which belongs to a class defined by the incidence matrix $%
A $ of a given graph $G$. In addition, we can populate the vertices of $G$
with weights from the bond distribution $w$ provided $w$ is compatible. Thus:

\begin{mydefinition}
\label{def:Omega_G}Let $\Gamma$ be any set of graphs for which a solution to
Eq.~(\ref{eq:w}) exists. Then $\Omega _{\Gamma w}$ is the set of circuits
which are $Gw$-compatible $\forall G \in \Gamma$.
\end{mydefinition}

\begin{mydefinition}
\label{def:I_G}$I(\Omega _{\Gamma w})$ is the class of Ising model instances 
$\{\Delta (G,w)\}_{w}$ whose graph is $G \in \Gamma$ and whose bond
distributions $\{w\}$ are $G\Omega$-compatible $\forall G \in \Gamma$.
\end{mydefinition}

Equation (\ref{eq:w}) is a system of linear equations over $GF(2)$. The
number of equations is equal to the number of even subgraphs of the given
graph or the total number of elements in the set $\ker (CH)$, and the number
of unknowns is equal to the number of edges. However, in spite of the
fact that the number of elements in $\ker (CH)$ scales exponentially in the
number of vertices, it turns out that finding a $w$ which solves Eq. (\ref%
{eq:w}) can be done efficiently [see Eq. (\ref{eq:w-sol}) below]. Let us
further stress\emph{\ }that Eq.~(\ref{eq:w}) is only a sufficient condition
for the equality of Eqs.~(\ref{eq:Z1})\ and (\ref{0U0}), and does not
capture the whole set of possible graph instances that our scheme can
handle. We define our instances via this condition because it simplifies the
analysis and it allows us to extract information about an interesting set of
quantum circuits which may be classically simulated. We discuss more general
sufficient conditions in Section \ref{nexta}, but leave the development of a
complete understanding of the actual graph instances that our mapping can
handle, and in particular finding necessary conditions for the equality of
Eqs.~(\ref{eq:Z1})\ and (\ref{0U0}), as a problem for future study.

\section{Circuits Corresponding to Certain Planar Graphs have an Efficient
Classical Simulation}

\label{sec:proof}

Let us recap the general idea we have developed so far. At the basis of our
construction are an inverse temperature $\beta $, bond strength $J$, and a
given graph $G$. We use this graph to first identify a compatible class of
quantum circuits $\Omega _{G}$ (Definition \ref{def:Omega_G}). This class is
restricted to a subclass $\Omega _{Gw}$ of circuits for which there exist
solutions to Eq.~(\ref{eq:w}). Such solutions are used to assign weights to
the graph's edges (a bond distribution), which yields a class of Ising model
instances compatible with $G$ and $\Omega _{G}$ (Definition \ref{def:I_G}).
In other words, we go from the unweighted graph to a class of compatible
circuits, and from there to back to the graph, which is now populated by a
class of compatible Ising models. Each circuit is also parametrized by an
angle $\theta $, and when we vary $\theta $ in the range $R$ [Eq.~(\ref%
{eq:range})] we also vary over $\beta $ and $J$, via $\tan (\theta /2)=\tanh
(\beta J)$. However, not all values of $\theta $ correspond to
universal circuits. Conversely, not every circuit need correspond to a
physical (positive) temperature.

In more detail, we identify the class of quantum circuits $\Omega _{G}$
compatible with $G$ (whose incidence matrix is $A$) by solving Eq.~(\ref%
{A=CH}) for the matrices $H$ representing each (or some) $\Omega \in \Omega
_{G}$, and then find the subset $\Omega _{Gw}$ for which the solution set to
Eq.~(\ref{eq:w}) is non-empty. We then look for a bond distribution $w$ that
satisfies Eq.~(\ref{eq:w}) for a given $H$. Every such $w$ defines an Ising
model instance $\Delta (G,w)$ that is compatible with $G$ and the
corresponding $\Omega \in \Omega _{G}$. We are guaranteed that provided such
a bond distribution $w$ exists, the partition function for the corresponding
Ising model is proportional to the matrix element $\langle \mathbf{0}\lvert
U(\Omega )\rvert \mathbf{0}\rangle $ [Eq.~(\ref{eq:element})]. In other
words, any bond distribution $w$ that satisfies Eq.~(\ref{eq:w}) induces a
direct connection between quantum computation and the Ising model on a graph
with that same bond distribution.

It is important to emphasize that Eq.~(\ref{eq:w}) will not always have a
solution $w$. Whether or not this is the case is entirely determined by the
given graph $G$, since $G$, via its incidence matrix $A$, determines the
class of $G$-compatible circuits $\Omega _{G}$ [i.e., the matrices $H$ that
solve Eq.~(\ref{A=CH})], and together they determine $a$ and $Q_{aa}$ that
go into Eq.~(\ref{eq:w}), which $w$ needs to solve. Thus, it makes sense to
define a class of graphs for which there exists a solution $w$ to Eq.~(\ref%
{eq:w}).

\begin{mydefinition}
\label{def:Theta} $\Theta$ is the set of graphs for which a solution to Eq.~(%
\ref{eq:w}) exists.
\end{mydefinition}

In order to characterize $\Theta $ we require some basic ideas from graph
theory, such as obstruction sets and downward closure. These are reviewed in
Appendix \ref{app:graphs}. We shall prove:

\begin{mylemma}
\label{finite-obs} The obstruction set for $\Theta $ is finite.
\end{mylemma}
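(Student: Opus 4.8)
The plan is to exhibit $\Theta$ as a minor-closed family of graphs and then invoke the Robertson--Seymour graph minor theorem. Recall that the minor relation is a well-quasi-order on finite graphs, so any family closed under taking minors possesses only finitely many minor-minimal excluded graphs, and this finite set of excluded minors is precisely the obstruction set. Hence the entire content of the lemma reduces to a single claim: that $\Theta$ is downward closed under the minor order, i.e.\ that if $G\in\Theta$ and $G'$ is a minor of $G$, then $G'\in\Theta$. Once minor-closure is in hand, Robertson--Seymour is a black box that delivers finiteness immediately.

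Since every minor is obtained from $G$ by a finite sequence of edge deletions, deletions of isolated vertices, and edge contractions, it suffices to show $\Theta$ is closed under each of these three elementary operations. For each, I would begin with a graph $G\in\Theta$ together with a certificate of membership --- a compatible circuit (a matrix $H$ satisfying Eq.~(\ref{A=CH})) and a bond distribution $w$ solving Eq.~(\ref{eq:w}) --- and construct an analogous certificate $w'$ for the transformed graph $G'$. The reformulation to keep in mind is that, because the left-hand side $a\cdot w$ of Eq.~(\ref{eq:w}) is \emph{linear} in $a$, a solution exists if and only if the map $a\mapsto Q_{aa}$ is the restriction to $\ker A$ of a linear functional; equivalently, $Q_{aa}$ is additive on the cycle space $\ker A$ (the space of even/Eulerian subgraphs of Definition~\ref{def:even}). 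Membership in $\Theta$ is therefore an intrinsic statement about the linearity of the form $Q_{aa}$ on the cycle space, and the job for each operation is to show this linearity is inherited.

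Edge deletion I would treat first, as it is the easy case. Deleting an edge $e$ removes the corresponding column from both $A$ and $H$, and the cycle space of $G\setminus e$ is exactly the set of cycles of $G$ avoiding $e$; hence $\ker A'$ embeds into $\ker A$ as $\{a\in\ker A : a_e=0\}$. Removing column $e$ from $H$ deletes row and column $e$ from $H^{t}CH$, so the restriction of $w$ to the surviving edges is the required $w'$: both $a\cdot w'$ and $Q'_{aa}$ coincide with their $G$-counterparts on this subspace, and Eq.~(\ref{eq:w}) continues to hold. Deletion of an isolated vertex leaves the incidence structure on the edges untouched and is immediate.

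The hard part will be edge contraction, and I expect the bulk of the work to live there. Contracting $e=(u,v)$ merges two rows of $A$ and alters the incidence matrix more substantially than deletion, so the compatible circuit $H$ --- and hence $Q=\mathrm{lwtr}(H^{t}CH)$ of Eq.~(\ref{eq:Q}) --- must be re-derived for $G'$ rather than merely restricted. The difficulty is compounded by the fact that Eq.~(\ref{A=CH}) fixes only the even rows of $H$ while leaving the odd rows free, so $Q_{aa}$ is not determined by $G$ alone; one must verify that whatever freedom is used to certify $G\in\Theta$ can be propagated to a valid choice for the contracted graph. I would attack this by relating the cycle space of $G/e$ to that of $G$ (contraction and deletion being dual on the cycle and cut spaces) and showing that additivity of $Q_{aa}$ on $\ker A$ forces additivity of the corresponding form on $\ker A'$. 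Establishing this inheritance cleanly is the crux of the lemma; the appeal to Robertson--Seymour is then routine.
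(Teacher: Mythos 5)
Your overall architecture is exactly the paper's: Lemma~\ref{finite-obs} is proved there by combining downward closure of $\Theta$ under the minor order (Lemma~\ref{graphK}, proved in Appendix~\ref{proofs}) with the Robertson--Seymour theorem (Theorem~\ref{th:RS}). Your reformulation of membership --- a solution $w$ to Eq.~(\ref{eq:w}) exists iff $a\mapsto Q_{aa}$ is linear on $\ker A$ --- is also the same lens the paper uses, via Lemma~\ref{quad-form}, in the proof of Theorem~\ref{th}. Your edge-deletion argument is correct and arguably cleaner than the paper's: you observe that $\ker A'$ embeds in $\ker A$ as $\{a\in\ker A : a_e=0\}$, that deleting column $e$ from $H$ leaves the form $Q_{aa}$ unchanged on that subspace, and that the restriction of $w$ therefore still certifies Eq.~(\ref{eq:w}); the paper instead verifies by explicit computation that $\mathrm{Rank}[A^{(G')}|\alpha^{(G')}]=\mathrm{Rank}[A^{(G')}]$ is preserved, showing that the contribution $\xi^{s}_{m}$ of the deleted column to each entry $\alpha_m$ vanishes together with the cycles through $e$.

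The gap is the contraction case, which you explicitly leave as a plan (``I would attack this by\dots'') rather than carry out. Since deletion alone does not give minor-closure, and Robertson--Seymour requires closure under both operations, the proposal as written does not establish the lemma; by your own account the crux is exactly the part that is missing. Your diagnosis of why it is hard is sound: contraction merges two rows of $A$, so a fresh compatible $H'$ must be produced --- Eq.~(\ref{A=CH}) fixes only part of $H$, and the residual freedom (the $Y$/$Z$ placement entering $Q=\mathrm{lwtr}(H^{t}CH)$ of Eq.~(\ref{eq:Q})) used to certify $G\in\Theta$ must be shown to transfer so that the new form remains linear on $\ker A'$. The natural starting point is the one you name: for a non-loop edge, restriction $a\mapsto a'$ identifies the cycle spaces (nullity is preserved under contraction), but the certificate transfer still has to be checked. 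For comparison, the paper's own treatment of contraction is also brief --- after the detailed deletion computation it asserts the contraction case is ``similar,'' noting only that a contraction destroys a cycle precisely when the contracted edge lies in a cycle of length three --- so your instinct about where the work concentrates agrees with where the paper, too, is thinnest; but a complete proof would have to supply that argument rather than defer it.
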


This will come as a consequence of $\Theta $'s downward closure and the
Robertson-Seymour theorem -- Theorem \ref{th:RS}. This is proved in Appendix %
\ref{proofs}. We shall also prove there that there are no solutions to Eq.~(%
\ref{eq:w}) for the graphs $K_{4}$ (the complete graph on four vertices) and 
$\bar{K}_{3,3}$, where $\bar{K}_{3,3}$ is $K_{3,3}$ with one edge missing
(which edge does not matter, since upon deletion of another edge one can
just relabel the edges and nodes and obtain exactly the same incidence
structure). Formally:

\begin{mylemma}
\label{lem:obs} The obstruction set for $\Theta $ includes $\bar{K}_{3,3}$
and $K_{4}$.
\end{mylemma}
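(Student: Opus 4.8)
The plan is to recast membership in $\Theta$ as a linearity condition and then reduce the lemma to a finite check on two small graphs. By Definition \ref{def:Theta}, $G\in\Theta$ precisely when Eq.~(\ref{eq:w}) admits a $w$, i.e. when the map $a\mapsto Q_{aa}$ agrees on the cycle space $Z:=\ker A$ with some linear functional $a\mapsto a\cdot w$. Since every linear functional on the subspace $Z\subseteq GF(2)^{|E|}$ restricts from one of the form $a\mapsto a\cdot w$, and since $Q_{00}=0$, such a $w$ exists if and only if $a\mapsto Q_{aa}$ is itself $GF(2)$-linear on $Z$. Equivalently, writing $B(a,a')\equiv Q_{(a\oplus a')(a\oplus a')}\oplus Q_{aa}\oplus Q_{a'a'}$ for the associated symmetric bilinear (polarization) form, $G\in\Theta$ iff $B$ vanishes identically on $Z\times Z$. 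Hence to prove $K_{4},\bar{K}_{3,3}\notin\Theta$ it suffices to exhibit, for each graph, two even subgraphs $a,a'\in Z$ (Definition \ref{def:even}) with $B(a,a')=1$.

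I would then carry out this verification directly, exploiting that both graphs have a three-dimensional cycle space ($\dim Z=|E|-|V|+1=3$: $K_{4}$ has $6$ edges on $4$ vertices, $\bar{K}_{3,3}$ has $8$ edges on $6$ vertices), so $Z$ has only eight elements. For $K_{4}$ I would write down $A$, choose a compatible $H$ solving $\tilde{A}=CH$ (even rows fixed to the incidence data, the free odd rows encoding, per gate, which endpoint carries the single $\sigma_{Y}$), form $Q=\mathrm{lwtr}(H^{t}CH)$ as in Eq.~(\ref{eq:Q}), and evaluate $B$ on a basis of $Z$. The natural witness is a pair of triangles sharing exactly one edge: their supports overlap in an odd number of edges, and the direct computation, whose heuristic is precisely this odd edge-overlap, yields $B=1$. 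For $\bar{K}_{3,3}$ I would repeat the computation using two $4$-cycles whose supports overlap oddly. Both checks are finite and mechanical.

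The remaining work upgrades ``$K_{4},\bar{K}_{3,3}\notin\Theta$'' to ``$K_{4},\bar{K}_{3,3}$ lie in the obstruction set.'' An obstruction is a minimal excluded graph, so I must also establish minimality: invoking the downward closure used for Lemma \ref{finite-obs}, it is enough to check that every graph strictly below $K_{4}$ (resp.\ $\bar{K}_{3,3}$) in the relevant order belongs to $\Theta$. These are small graphs---$K_{4}-e$, triangles, even cycles, paths and forests---for which $B\equiv 0$ on the lower-dimensional cycle space is readily confirmed and a $w$ exhibited explicitly.

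I expect two points to require the most care, and the first is the main obstacle. The quantity $Q_{aa}$, and hence $B$, depends on the chosen compatible circuit---both on the orientation of the $\sigma_{Y}$ assignments in $H$ and on the temporal gate ordering that defines $\mathrm{lwtr}$ in Eq.~(\ref{eq:Q})---so a single computation only certifies non-membership for one circuit. To make the statement a genuine property of the graph I would either fix the canonical circuit supplied by the mapping of Ref.~\cite{JOE2}, or show that flipping a $\sigma_{Y}$ endpoint or transposing two gates changes $Q_{aa}$ only by a function that is linear on $Z$ (hence absorbable into $w$), so that the vanishing of $B|_{Z}$ is gauge invariant; proving this invariance is delicate, because such a change alters $Q_{aa}$ by a product of two linear functionals on $Z$, which is linear only after using the even-degree relations defining $Z$. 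Second, for the obstruction-set claim one must confirm that the order underlying the downward closure leaves $K_{4}$ and $\bar{K}_{3,3}$ incomparable: contracting two disjoint edges of $\bar{K}_{3,3}$ leaves four mutually adjacent branch sets, so $K_{4}$ is in fact a \emph{minor} of $\bar{K}_{3,3}$ and would be redundant under the minor order; recording both as obstructions therefore rests on the (topological-)subgraph structure, under which the triangle-free $\bar{K}_{3,3}$ and the triangle-containing $K_{4}$ are incomparable.
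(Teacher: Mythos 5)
Your reformulation is sound, and it is in fact the same reduction the paper itself uses in the proof of Theorem~\ref{th}: for a \emph{fixed} compatible $H$, Eq.~(\ref{eq:w}) has a solution iff $a\mapsto Q_{aa}$ is linear on $\ker A$, i.e., iff $K^{t}QK$ is symmetric (Lemma~\ref{quad-form}), in which case $K^{t}w=\mathrm{diag}(K^{t}QK)$ is always consistent because $K^{t}$ has full row rank. The genuine gap is the one you flag but do not close: $Q$ depends on the chosen circuit representation --- which endpoint of each edge carries the $\sigma_{Y}$, the optional $\sigma_{Z}$ insertions at non-incident qubits, and the temporal ordering entering $\mathrm{lwtr}$ in Eq.~(\ref{eq:Q}) --- while Definition~\ref{def:Theta} requires that \emph{no} such choice admits a $w$. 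A single-$H$ polarization computation therefore does not establish $K_{4},\bar{K}_{3,3}\notin\Theta$, and your first fallback (fixing the canonical circuit of Ref.~\cite{JOE2}) proves a strictly weaker statement. Your second fallback, gauge invariance of the vanishing of the polarization form, is precisely the doubtful point: inserting a $\sigma_{Z}$ at qubit $i$ in gate $k$ shifts $Q_{aa}$ by $a_{k}\sum_{j<k}A_{i,j}a_{j}$ (up to a linear term), a product of two functionals that the even-degree relations defining $\ker A$ do not in general linearize. The paper's actual proof (Appendix~\ref{app:algo}) closes this quantifier by machine rather than by invariance: the $Z$-insertion freedom is carried as unknowns $z_{i}$ inside the linear system, all inequivalent bond distributions are enumerated, and the combined system is shown unsatisfiable for $K_{3,3}$, $K_{5}$, $\bar{K}_{3,3}$ and $K_{4}$. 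Until you either perform that joint enumeration or actually prove the invariance, your argument certifies only that one particular circuit over each graph fails.

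Your minimality program is also misdirected relative to what the lemma asserts. The paper uses ``obstruction set includes'' loosely to mean ``these are forbidden minors'': it proves only non-membership in $\Theta$, whence by downward closure (Lemma~\ref{graphK}) every graph containing them as a minor is excluded; no minor-minimality is claimed or proved, so checking that all graphs strictly below $K_{4}$ and $\bar{K}_{3,3}$ lie in $\Theta$ is not needed to match the paper. More importantly, as you yourself correctly observe, $K_{4}$ \emph{is} a minor of $\bar{K}_{3,3}$, so under the minor order --- the only order for which Lemma~\ref{graphK} gives downward closure and for which Robertson--Seymour (Theorem~\ref{th:RS}) applies --- $\bar{K}_{3,3}$ can never be minor-minimal among excluded graphs, and your proposed retreat to the topological-subgraph order abandons the framework on which Lemma~\ref{finite-obs} rests (topological containment is not a well-quasi-order). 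The productive use of your observation runs the other way: once $K_{4}\notin\Theta$ is established over all gauge choices, minor-closure of $\Theta$ yields $\bar{K}_{3,3}\notin\Theta$ for free --- a shortcut that would halve the verification, and one that neither your proposal nor the paper's redundant four-graph check exploits.
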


The proof is described in Appendix \ref{proofs}. See Fig.~\ref{graphs} in
Appendix \ref{app:graphs} for a pictorial representation of the graphs
mentioned in Lemma~\ref{lem:obs}. As a consequence we will find that \emph{%
all graphs in }$\Theta $ \emph{are planar}.

We shall clarify this conclusion and the previous lemma in the next
subsection, but given their validity, a $G\Omega $-compatible bond
distribution $w$ necessarily corresponds to a \emph{planar} graph $G$. From
this it follows that its partition function can be efficiently computed
classically, and hence the corresponding class of quantum circuits, i.e., $%
\Omega _{\Theta w}$, also has an efficient classical simulation. Let us be
precise about what we mean by \textquotedblleft classically efficiently
simulatable\textquotedblright\ (CES).

\begin{mydefinition}
\label{def:CES}A uniform family $\mathcal{G}_{n}=\{\Omega _{i}\}$ of $n$%
-qubit quantum circuits is \textquotedblleft classically efficiently
simulatable\textquotedblright\ (CES)\ if the matrix element $|\langle 
\mathbf{0}\lvert U(\Omega _{i})\rvert \mathbf{0}\rangle |$ of each circuit
in $\mathcal{G}_{n}$ can be obtained to $k$ digits of precision in time $%
\mathrm{poly}(n,k)$ by classical means \cite{z2}.
\end{mydefinition}

This definition is a modified version of the one given in Ref. \cite%
{jozsa-2008}, which also includes a discussion on how it can be weakened.

When we collect the observations above we arrive at an efficient classical
test for whether a given quantum circuit is CES. This is summarized in
Theorem \ref{th}, which is our main result and the subject of the remainder
of the paper:

\begin{mytheorem}
\textbf{\emph{(Circuits Corresponding to Certain Planar Graphs have an
Efficient Classical Simulation)}}\newline
\label{th} The class of quantum circuits $\Omega _{\Theta w}$ is CES.
Deciding whether a given a graph $G$ is in $\Theta $ can be efficiently
decided.
\end{mytheorem}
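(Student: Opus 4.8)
The theorem has two assertions: that the circuit class $\Omega_{\Theta w}$ is CES (Definition \ref{def:CES}), and that membership $G \in \Theta$ is efficiently decidable. The plan is to prove each in turn, leaning on the machinery already assembled.

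For the CES claim, my approach would be to trace the chain of equalities we have already established and verify that each link is classically efficient. First I would invoke Lemmas \ref{finite-obs} and \ref{lem:obs}: since the obstruction set for $\Theta$ is finite and contains $\bar{K}_{3,3}$ and $K_4$, and since $\Theta$ is downward closed under the minor relation, every graph in $\Theta$ avoids $K_4$ (and hence, a fortiori, $K_5$) as a minor. I would then argue that excluding $K_4$ forces planarity --- planarity is equivalent to excluding both $K_5$ and $K_{3,3}$ as minors (Kuratowski--Wagner), and any $K_4$-minor-free graph certainly excludes $K_5$, so one needs only to confirm that the $\bar K_{3,3}$ obstruction additionally rules out the $K_{3,3}$ minor. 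Having concluded that every $G \in \Theta$ is planar, I would then use Eq.~(\ref{eq:element}), which expresses $\langle \mathbf{0}\lvert U(\Omega)\rvert \mathbf{0}\rangle$ as a fixed prefactor times $Z_\Delta(\lambda)$ for any $Gw$-compatible circuit $\Omega$ and $G\Omega$-compatible bond distribution $w$. Because $G$ is planar and $\lambda$ is fixed (i.e.\ the Ising model has no external field, as the Hamiltonian (\ref{Ham}) contains only pairwise terms), I would invoke the known polynomial-time algorithm for the planar Ising partition function to compute $Z_\Delta(\lambda)$ to $k$ digits in $\mathrm{poly}(|V|,|E|,k)$ time. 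Dividing out the explicit prefactor $\tfrac{(1-\lambda^2)^{|E|/2}}{(1+\lambda^2)^{|E|/2}2^{|V|}}$ then yields $|\langle \mathbf{0}\lvert U(\Omega)\rvert \mathbf{0}\rangle|$ to the required precision, establishing CES for the family.

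For the decidability claim, I would appeal directly to the algorithmic form of the Robertson--Seymour theorem. Since $\Theta$ is minor-closed with a finite obstruction set $\{H_1,\dots,H_r\}$, testing $G \in \Theta$ reduces to checking, for each fixed forbidden minor $H_j$, whether $G$ contains $H_j$ as a minor. The Robertson--Seymour minor-testing algorithm decides $H$-minor containment for any fixed $H$ in time $O(|V|^3)$ (or $O(|V|^2)$ with later improvements), and since $r$ is finite and independent of $G$, running this test over all obstructions is polynomial in $|V|$. Hence membership in $\Theta$ is efficiently decidable.

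The main obstacle I anticipate is not in the graph-theoretic bookkeeping but in making the reduction from the matrix element to $Z_\Delta$ genuinely constructive and efficient \emph{end to end}. In particular, given a circuit $\Omega$, one must (i) recover its $H$-matrix, (ii) check $G$-compatibility via Eq.~(\ref{A=CH}) and identify the underlying graph $G$, (iii) confirm $Gw$-compatibility by verifying the linear system (\ref{eq:w}) over $GF(2)$ has a solution, and (iv) actually exhibit such a $w$ to feed into the planar algorithm. The paper asserts (just above Definition \ref{def:Theta}) that solving Eq.~(\ref{eq:w}) can be done efficiently despite $\ker(CH)$ being exponentially large, via the closed-form solution referenced as Eq.~(\ref{eq:w-sol}); I would need to cite that construction and confirm it runs in $\mathrm{poly}(n,N)$ time, since the exponential size of the constraint set (\ref{eq:w}) is the one place where naive enumeration would destroy efficiency. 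Assembling these pieces with explicit complexity bounds --- and confirming that the prefactor in (\ref{eq:element}) is itself computable to $k$ digits in polynomial time --- is where the bulk of the careful argument would lie.
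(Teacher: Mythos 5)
Your treatment of the first (CES) claim matches the paper's proof in all essentials: downward closure of $\Theta$ (Lemma~\ref{graphK}) plus Lemma~\ref{lem:obs} excludes $K_{4}$ and $\bar{K}_{3,3}$ as minors, hence also $K_{5}$ and $K_{3,3}$, giving planarity (the paper records this as Lemma~\ref{lem:planar}); Kasteleyn's algorithm then evaluates $Z_{\Delta }(\lambda )$ to the required precision in polynomial time, and Eq.~(\ref{eq:element}) converts this to the matrix element, with Lemma~\ref{lem:decision} handling the decision-problem interpretation. Your derivation of planarity from the two known obstructions via minor monotonicity is sound and in fact mirrors the remark in the paper's conclusion.

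The gap is in the second claim. You propose to decide $G\in \Theta $ by testing $G$ against the obstruction set with Robertson--Seymour minor testing, but the obstruction set for $\Theta $ is \emph{not known}: Lemma~\ref{finite-obs} guarantees only that it is finite, Lemma~\ref{lem:obs} exhibits two of its members, and the paper explicitly leaves the complete characterization open (``there may be other forbidden minors''). Testing only for $K_{4}$ and $\bar{K}_{3,3}$ minors yields a necessary condition --- it decides a possibly strict superset of $\Theta $, which is why the paper uses it only as a \emph{rejection} test --- while ``running the test over all obstructions'' is not an algorithm you can actually write down; at best Robertson--Seymour gives non-constructive existence of a cubic-time decider. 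The paper instead proves the second claim constructively: writing $a=Kx$ with $K$ a basis of $\ker A$, Eq.~(\ref{eq:w}) becomes $x^{t}K^{t}QKx=x^{t}K^{t}w$ for all $x$; since the right-hand side is linear, Lemma~\ref{quad-form} forces $K^{t}QK$ to be symmetric whenever a solution exists, the quadratic form then linearizes to $x^{t}\mathrm{diag}(K^{t}QK)$, and the whole condition collapses to the GF(2) linear system $K^{t}w=\mathrm{diag}(K^{t}QK)$ [Eq.~(\ref{eq:w-sol})], whose consistency is checkable (and whose solution $w$ is computable) by Gaussian elimination. Note also that your CES pipeline quietly depends on this same result: step (iv) of your ``obstacle'' list cites Eq.~(\ref{eq:w-sol}) as an external fact, but that equation is established only \emph{inside} the paper's proof of this very theorem, so in a self-contained argument you must supply the Lemma~\ref{quad-form} derivation rather than cite it --- it is simultaneously what makes the decision procedure explicit and what furnishes the bond distribution $w$ that your Kasteleyn step requires.
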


The theorem comprises two parts. In the first we characterize an entire
class of CES\ quantum circuits. The proof we offer below is not
constructive, i.e., we prove that there \emph{exists} an efficient classical
simulation of the class of quantum circuits $\Omega _{\Theta w}$, and also
provide a test of non-membership in $\Omega _{\Theta w}$ for a given quantum
circuit. In the second part we given an explicit construction which decides
whether a given graph belongs to the set of graphs resulting in CES
circuits. To illustrate this part, we discuss a class of graphs (which is a
subset of $\Omega _{\Theta w}$) in Section \ref{sparsegraphs}, for which we
can explicitly find the $G\Omega $-compatible bond distribution. This
class is highly restricted in that the number of even subgraphs only grows
polynomially in the number of vertices, whereas in general, including
restrictions to planar graphs, the number of even subgraphs grows
exponentially. Nonetheless, the class of quantum circuits that one obtains
under this restriction is interesting in light of some new results about the
classical simulatability of quantum circuits \cite{Nest:08,jozsa-2008}.

For the benefit of the reader we summarize the scheme of the first
claim of the proof informally. This will also serve to summarize again
the mapping between quantum circuits and graphs.

\begin{enumerate}
\item \textbf{Given:} any subset $\Gamma $ of $\Theta $. Every $G\in \Gamma $
has a $G\Omega $-compatible bond distribution for some quantum circuit $%
\Omega $, by assumption.

\item Take the incidence matrices $CH$ of the graphs in $\Gamma$ and
transform them into the \emph{H-matrix representations} of the corresponding
quantum circuits. The following constraint must be respected: Every column
must have one $Y$-operation and can have at most one $X$-operation. [This
constraint comes from the fact that $CH$ should be an incidence matrix for a
graph, where $C$ is the block-diagonal matrix defined in Eq.~(\ref{eq:C}).
Without it one has a correspondence between quantum circuits and hypergraphs 
\cite{JOE2}. Indeed, if the incidence matrix has more than two ones per
column than one has a hypergraph.]

\item Thus $\Gamma$ corresponds to a set of quantum circuits $\Omega_{\Gamma
w}$, i.e., every quantum circuit $\Omega \in \Omega_{\Gamma w}$ is \emph{%
Gw-compatible} for some $G \in \Gamma$.

\item Show that our mapping from circuits to graphs defines a
\textquotedblleft downward closed set\textquotedblright\ of graphs, which
means that we may apply the Robertson-Seymour Theorem \cite{RS}. This
theorem guarantees that there is a finite set of graphs (obstruction set)
for which we can test whether or not $G$ has any members of this set as a
graph minor \cite{Graph} (at most cubic complexity in the number of quantum
gates in $\Omega$).

\item Define $\Theta $, via this obstruction set, i.e., a graph is a member
of $\Theta $ if it does not have $K_{4}$ and $\bar{K}_{3,3}=K_{3,3}$ \emph{%
with one edge deleted} as minors (there may be other forbidden minors). One
has a set of circuits which correspond to the graphs which have a satisfying
bond distribution $w$ for equation (\ref{eq:w}). We call the corresponding
class of quantum circuits $\Omega _{\Theta w}$. (This specific obstruction
set has been tested with mathematical software. See Appendix \ref{app:algo}.)

\item Due to the fact that these graphs are planar, the partition function $%
Z $ of any graph in $\Theta$ can be computed efficiently by a classical
computer \cite{Welsh:book}.

\item Using equation (\ref{eq:element}), show that knowledge of $Z$ can be
used to determine the outcome of a quantum circuit $\Omega \in
\Omega_{\Gamma w}$ for a decision problem.

\item \textbf{Conclude:} Families of quantum circuits in $\Omega _{\Theta w}$
which solve a decision problem can be classically simulated.
\end{enumerate}

Being that $\Gamma $ is a subset of $\Theta $, any subset $\Omega _{\Gamma
w} $ of $\Omega _{\Theta w}$ is CES. \newline

Conversely, we have a test for non-membership in the set $\Omega _{\Theta w}$%
:

\begin{enumerate}
\item \textbf{Input} a quantum circuit $\Omega$.

\item Transform $\Omega $ into a matrix whose columns represent Pauli
operations (that are to be exponentiated) and every pair of rows are the
qubits being acted upon as described in Section \ref{sec:QWGT-QC}. This
matrix is called $H$ [Eq.~(\ref{eq:H})] and is in 1-to-1 correspondence with 
$\Omega $. As above, the following constraint must be respected: Every
column must have one $Y$-operation and can have at most one $X$-operation.

\item After the above transformation, construct a corresponding incidence
matrix $CH$ of a graph $G$.

\item Check the graph $G$ for the minors $\bar{K}_{3,3}$ and $K_4$.

\item \textbf{Conclude:} If either of these are minors of $G$ then reject $%
\Omega$.
\end{enumerate}

\subsection{Ordering Lemma}

The following lemma allows us to introduce an ordering on the elements in $%
\Theta$.

\begin{mylemma}
\label{graphK} If a graph $G$ is a member of $\Theta$, then so is $%
G\setminus e_{j}$ or $G/e_{j}$, i.e., the deletion or contraction of an
arbitrary edge $e_{j}$ from a graph in $\Theta$ is also in $\Theta$.
\end{mylemma}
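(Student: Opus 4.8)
The plan is to translate membership in $\Theta$ into a single linear-algebraic condition and then verify that this condition is inherited under both edge deletion and edge contraction, which is precisely minor-closure.

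First I would recast Definition~\ref{def:Theta}. For a fixed $G$-compatible matrix $H$ (so $\tilde A = CH$), Eq.~(\ref{eq:w}) is a linear system over $GF(2)$ in the unknown $w$, with one equation $a\cdot w = Q_{aa}$ per even subgraph $a\in\ker A$. The pairing $(a,w)\mapsto a\cdot w \bmod 2$ on $GF(2)^{|E|}$ is nondegenerate, so the linear map $\Phi:w\mapsto(a\mapsto a\cdot w)$ from $GF(2)^{|E|}$ to the dual $(\ker A)^{*}$ of the cycle space has kernel $(\ker A)^{\perp}$, hence image of dimension $|E|-\dim(\ker A)^{\perp}=\dim\ker A$, i.e.\ $\Phi$ is onto. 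Therefore Eq.~(\ref{eq:w}) is solvable \emph{iff} the right-hand side $a\mapsto Q_{aa}$ is itself a $GF(2)$-linear functional on $\ker A$. Writing $q_{G}(a):=Q_{aa}=a^{t}Qa$ and using $q_{G}(a+a')=q_{G}(a)+q_{G}(a')+a^{t}(Q+Q^{t})a'$, the criterion becomes: $G\in\Theta$ iff $\ker A$ is totally isotropic for the symmetric bilinear form $\beta(a,a')=a^{t}(Q+Q^{t})a'$.

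Deletion is then immediate. The incidence matrix of $G\setminus e_{j}$ is $A$ with column $j$ removed, and a compatible circuit is obtained by deleting gate $g_{j}$, i.e.\ column $j$ of $H$; the resulting $Q'$ is the principal submatrix of $Q$ with row and column $j$ struck out. The cycle space of $G\setminus e_{j}$ is the subspace $\{a\in\ker A:a_{j}=0\}$, and on it $q_{G\setminus e_{j}}$ agrees with $q_{G}$. A linear functional restricted to a subspace is still linear, so the criterion passes to $G\setminus e_{j}$; concretely, if $w$ solves Eq.~(\ref{eq:w}) for $G$ then $w$ with coordinate $j$ deleted solves it for $G\setminus e_{j}$. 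Hence $G\setminus e_{j}\in\Theta$.

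Contraction is the substantive case. For a non-loop edge $e_{j}=(u,v)$, merging the two vertices corresponds on the circuit side to replacing the two Pauli row-pairs of $H$ belonging to qubits $u$ and $v$ by their bitwise XOR (the single-qubit product of the two factors) and deleting column $j$, giving $H''$ with $CH''=\tilde A''$, the interleaved incidence matrix of $G/e_{j}$. The coordinate-drop $\pi$ that forgets entry $j$ maps $\ker A$ into $\ker A'$; since $e_{j}$ is not a loop, $\pi$ is injective, and by Euler's formula $\dim\ker A=\dim\ker A'$, so $\pi$ is an isomorphism $\ker A\xrightarrow{\sim}\ker A'$. The target is to prove $q_{G/e_{j}}\circ\pi=q_{G}+\ell$ for some $GF(2)$-linear $\ell$; granting this, linearity of $q_{G}$ on $\ker A$ forces linearity of $q_{G/e_{j}}$ on $\ker A'$, so $G/e_{j}\in\Theta$. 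I expect the main obstacle to be exactly this last step: tracking how $Q=\mathrm{lwtr}(H^{t}CH)$ changes when two qubit rows are XOR-ed. The off-diagonal entries $M_{kl}=b_{k}^{t}Cb_{l}$ carry the Pauli sign data of Eq.~(\ref{eq:brule}), and one must verify that the phases produced by combining the $u$- and $v$-factors contribute only terms \emph{linear} in $a$ on the cycle space (equivalently, that the merge preserves the isotropy of $\ker A$ under $\beta$), while also confirming that the one-$Y$-per-column constraint on the $H$-matrix can be maintained after the merge. Once deletion and contraction are settled---together with the trivial case of deleting an isolated vertex, which leaves $\ker A$ unchanged---$\Theta$ is closed under all three minor operations, hence downward closed in the minor order; this yields the ordering asserted in the lemma and licenses the application of the Robertson--Seymour theorem in Lemma~\ref{finite-obs}.
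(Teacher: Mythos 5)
Your reformulation of membership in $\Theta$ is correct and is essentially the paper's own mechanism in disguise: Eq.~(\ref{eq:w}) is solvable iff $a\mapsto a^{t}Qa$ is a linear functional on $\ker A$, which by polarization over GF(2) is the condition that $\ker A$ be totally isotropic for $Q+Q^{t}$ --- exactly the condition the paper extracts as symmetry of $K^{t}QK$ via Lemma~\ref{quad-form} in the proof of Theorem~\ref{th}. Your deletion argument is complete, and in fact cleaner than the paper's: the paper proves deletion by decomposing $\alpha _{m}^{(G)}=\alpha _{m}^{(G^{\prime })}+\xi _{m}^{s}$, observing that every term of $\xi _{m}^{s}$ carries a factor $a_{ms}$ and hence vanishes on the surviving cycles (those with $a_{ms}=0$), and concluding preservation of the rank condition $\mathrm{Rank}[A^{(G)}|\alpha ^{(G)}]=\mathrm{Rank}[A^{(G)}]$; your observation that one is merely restricting a linear functional to the subspace $\{a\in \ker A:a_{j}=0\}$ says the same thing more conceptually.

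The genuine gap is contraction. You set up the right objects ($\pi :\ker A\rightarrow \ker A^{\prime }$ an isomorphism for a non-loop edge, the XOR-merge of the two qubit row-pairs of $H$), but then write only that ``the target is to prove $q_{G/e_{j}}\circ \pi =q_{G}+\ell $'' and that you ``expect the main obstacle to be exactly this last step.'' That step is the lemma's entire content on the contraction side, and it does not follow formally from the deletion case: unlike deletion, cycles with $a_{j}=1$ \emph{survive} contraction, so the cross terms $a_{j}a_{l}(H^{t}CH)_{jl}$ do not drop out the way $\xi ^{s}$ does; moreover the merge changes the phase data by $R_{kl}=b_{k,u}^{t}c\,b_{l,v}+b_{k,v}^{t}c\,b_{l,u}$ (with $c$ the block of Eq.~(\ref{eq:C}) and $b_{k,u}$ the $u$-th qubit pair of column $k$), and one must show, using the incidence structure (two $X$-entries per column, located at the endpoint rows), that the resulting quadratic contribution $\sum_{k<l}a_{k}a_{l}R_{kl}$ is linear on $\ker A$. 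You also flag, without resolving, whether the one-$Y$-per-column constraint survives the XOR of row-pairs (e.g., $Y\oplus Y=I$, or $X\oplus Z=Y$ when stray $Z$'s sit on the merged qubits), i.e., whether the merged $H^{\prime \prime }$ is even an admissible circuit representation. The paper disposes of contraction by arguing inside its rank framework that the deletion argument carries over, with extra care only when $e_{j}$ lies in a cycle of length three (where a cycle is lost); your proposal, as written, asserts rather than proves the corresponding fact, so it establishes closure under deletion only --- which is not enough for the downward closure needed to invoke the Robertson--Seymour theorem in Lemma~\ref{finite-obs}.
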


The proof of this lemma is technical and is given in Appendix \ref{proofs}.
Lemma \ref{graphK} implies that $\Theta $ is a \emph{downwardly closed set
with respect to the minor ordering}. Hence we can apply the
Robertson-Seymour theorem \cite{RS}, Theorem \ref{th:RS}, which states that 
\emph{any graph may be tested for membership in a given downwardly closed
set of graphs by just searching the graph for a finite set of minors.} The
complexity of doing this, given knowledge of the minors one is looking for,
can be shown to be cubic in the number of edges. We implemented this to test
Eq.~(\ref{eq:w}) for non-planar solutions, as we describe next.

\subsection{Equation (\protect\ref{eq:w}) implies planarity}

Using mathematical software we demonstrated that the mapping between graphs
and circuits described above, with the sufficient condition given by Eq.~(%
\ref{eq:w}), cannot be satisfied for $K_{5}$ and $K_{3,3}$. That is, $K_{5}$
and $K_{3,3}$ are forbidden minors for $\Theta $. The algorithm we
implemented to check this is described in Appendix \ref{app:algo}. However,
a finite graph is planar if and only if it does not have $K_{5}$ or $K_{3,3}$
as minors (Wagner's theorem; see Appendix \ref{app:graphs}). As stated
earlier, we thus have

\begin{mylemma}
\label{lem:planar}All graphs in $\Theta $ are planar.
\end{mylemma}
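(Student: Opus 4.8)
The plan is to reduce planarity to a forbidden-minor condition via Wagner's theorem, which (see Appendix~\ref{app:graphs}) states that a finite graph is planar if and only if it contains neither $K_5$ nor $K_{3,3}$ as a minor. It therefore suffices to prove that no graph in $\Theta$ admits $K_5$ or $K_{3,3}$ as a minor, and the whole argument will pivot on the downward closure of $\Theta$ under the minor ordering (Lemma~\ref{graphK}).

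First I would establish that $K_5 \notin \Theta$ and $K_{3,3} \notin \Theta$. The most self-contained route is to certify directly that the $GF(2)$ linear system (\ref{eq:w}) is \emph{inconsistent} for each of these two graphs: one forms the incidence matrix $A$, enumerates every $H$-matrix representation satisfying $\tilde{A}=CH$ [Eq.~(\ref{A=CH})], computes the associated $Q=\mathrm{lwtr}(H^{t}\tilde{A})$ [Eq.~(\ref{eq:Q})] together with the quadratic signs $Q_{aa}$ over the cycle space $a\in\ker A$, and verifies that the resulting system for $w$ has empty solution set. This is precisely the computation carried out with mathematical software and documented in Appendix~\ref{app:algo}. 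Alternatively, and more economically, these two exclusions follow already from Lemma~\ref{lem:obs}: since $K_4$ is a minor of $K_5$ (delete one vertex) and $\bar{K}_{3,3}$ is a minor of $K_{3,3}$ (delete one edge), and since $K_4,\bar{K}_{3,3}\notin\Theta$ by Lemma~\ref{lem:obs}, the downward closure of $\Theta$ forces $K_5,K_{3,3}\notin\Theta$ as well.

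Second, I would apply downward closure once more to propagate these exclusions to the entire family. Suppose for contradiction that some $G\in\Theta$ had $K_5$ (resp.\ $K_{3,3}$) as a minor. By Lemma~\ref{graphK}, every minor of a graph in $\Theta$ again lies in $\Theta$, so $K_5\in\Theta$ (resp.\ $K_{3,3}\in\Theta$), contradicting the first step. Hence no graph in $\Theta$ contains $K_5$ or $K_{3,3}$ as a minor, and Wagner's theorem immediately yields that every $G\in\Theta$ is planar.

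The main obstacle is the first step. The subtlety is not in solving a single linear system but in the fact that a fixed graph is $G$-compatible with many distinct circuits: Eq.~(\ref{A=CH}) generally admits several solutions $H$, each producing a different sign pattern $Q_{aa}$, so one must verify that (\ref{eq:w}) is unsolvable for \emph{every} admissible $H$, not merely a convenient one. Routing the argument through Lemma~\ref{lem:obs} is attractive precisely because it shifts this exhaustive check onto the smaller graphs $K_4$ and $\bar{K}_{3,3}$, for which the cycle space $\ker A$ and the set of valid $H$-matrices are small enough to handle by the method of Appendix~\ref{proofs}.
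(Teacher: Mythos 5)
Your proposal is correct and matches the paper's own argument: the paper likewise establishes $K_5,K_{3,3}\notin\Theta$ by the computational check of Appendix~\ref{app:algo}, invokes the downward closure of $\Theta$ (Lemma~\ref{graphK}) to make them forbidden minors, and concludes via Wagner's theorem. Even your ``more economical'' route through Lemma~\ref{lem:obs} is anticipated by the paper, which notes in its conclusions that the exclusion of $\bar{K}_{3,3}$ and $K_4$ yields that of $K_{3,3}$ and $K_5$ by downward closure, and your worry about quantifying over all admissible $H$-matrices is exactly what the $z_i$ variables in the appendix algorithm handle.
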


We remark that we have been able to find examples of planar graphs for which
there do exist solutions $w$ to Eq.~(\ref{eq:w}), e.g., $K_{2,3}$. As we
explain below, this means that $\Theta $ includes planar graphs which are
not outerplanar.

\subsection{Knowledge of the matrix element determines output to a decision
problem}

The standard way in which a quantum circuit $U$ solves a decision problem,
is to measure, say, the first qubit, and decide the problem according to
this measurement outcome. In Ref. \cite{z2} it was shown that for every such
decision problem, there exists another quantum circuit $U^{\prime }$ such
that the evaluation of $\langle 0|U^{\prime }|0\rangle $ is equivalent to
the decision problem solved by applying $U$ and measuring the first qubit.
In this sense we have:

\begin{mylemma}
\label{lem:decision}Knowledge of $\langle \mathbf{0}\lvert U(\Omega )\rvert 
\mathbf{0}\rangle $ suffices to determine the output of a quantum circuit
which is being used to solve a decision problem.
\end{mylemma}

For a proof see, e.g., Ref. \cite{z2}.

\subsection{Proof of Theorem \protect\ref{th}}

Collecting everything we now prove our main theorem. We first need one
more technical Lemma:

\begin{mylemma}
\label{quad-form}\emph{A quadratic form }$x^{t}Ax$ \emph{over GF(2)\ is
linear in }$x$ \emph{(equal to }$x^{t}\mathrm{diag}(A)$\emph{) iff }$A$\emph{%
\ is symmetric.}
\end{mylemma}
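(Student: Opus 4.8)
The plan is to expand the quadratic form coordinate-wise over GF(2) and exploit the identity $x_i^2 = x_i$, valid because every coordinate satisfies $x_i \in \{0,1\}$. Writing $x^{t}Ax = \sum_{i,j} A_{ij} x_i x_j$, I would first separate the diagonal from the off-diagonal contributions. The diagonal terms contribute $\sum_i A_{ii} x_i^2 = \sum_i A_{ii} x_i = x^{t}\mathrm{diag}(A)$, using $x_i^2 = x_i$. Pairing each off-diagonal index $(i,j)$ with $(j,i)$ gives $\sum_{i<j}(A_{ij}+A_{ji}) x_i x_j$. Hence, with all arithmetic understood modulo $2$,
\begin{equation}
x^{t}Ax = x^{t}\mathrm{diag}(A) + \sum_{i<j}(A_{ij}+A_{ji})\,x_i x_j .
\end{equation}

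With this decomposition in hand the lemma is nearly immediate. For the forward implication, suppose $A$ is symmetric. Then $A_{ij}=A_{ji}$, so each coefficient $A_{ij}+A_{ji} = 2A_{ij} = 0$ over GF(2); the quadratic remainder vanishes identically and $x^{t}Ax = x^{t}\mathrm{diag}(A)$, which is linear. For the converse, suppose the form is linear, i.e.\ equals $x^{t}\mathrm{diag}(A)$ for all $x$. Then the remainder $\sum_{i<j}(A_{ij}+A_{ji})x_i x_j$ vanishes for every $x\in\{0,1\}^n$. To extract the individual coefficients I would evaluate at the vectors $x = e_i+e_j$ (sums of standard basis vectors): such an $x$ has $x_k=1$ precisely for $k\in\{i,j\}$, so the only surviving monomial is $x_i x_j$, forcing $A_{ij}+A_{ji}=0$ and hence $A_{ij}=A_{ji}$ for all $i\neq j$. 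Since diagonal entries are irrelevant to symmetry, $A$ is symmetric.

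The only point deserving care — and the conceptual crux — is the interplay of the reduction $x_i^2 = x_i$ with the characteristic-$2$ collapse $A_{ij}+A_{ji}=2A_{ij}$, which is exactly what singles out \emph{symmetry} (rather than antisymmetry or some other condition) as the criterion; over a field of characteristic $\neq 2$ one would instead replace $A$ by its symmetric part and the statement would read differently. I would also flag that ``linear'' is meant as equality of functions on $\{0,1\}^n$, equivalently the vanishing of all degree-$2$ terms in the unique multilinear (algebraic normal form) representation, so that the evaluation-at-$e_i+e_j$ argument is legitimate. Beyond this, no machinery is required: the result is a bookkeeping identity in GF(2) arithmetic.
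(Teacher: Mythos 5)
Your proof is correct and follows essentially the same route as the paper's: both decompose $x^{t}Ax$ into the diagonal part (linearized via $x_i^2=x_i$) and paired off-diagonal terms whose coefficients $A_{ij}+A_{ji}$ collapse in characteristic $2$, with the converse hinging on a surviving monomial $x_{i'}x_{j'}$ for an asymmetric pair. Your only addition is making explicit, via evaluation at $x=e_i+e_j$, why a nonzero quadratic monomial in the multilinear representation genuinely contradicts linearity as a function on $\{0,1\}^n$ --- a step the paper leaves implicit.
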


Here $\mathrm{diag}(A)$ denotes a vector comprising the diagonal of $A$. The
proof of this Lemma is presented in Appendix \ref{proofs}.

\begin{proof}[Proof of Theorem \protect\ref{th}]
We start from the second claim of the Theorem, namely we prove that we
can efficiently decide whether a given graph belongs to the set $\Theta $,
and that we can find some $w$ if it does belong. Let $G$ be a given graph
and let $K$ be the matrix whose columns are a basis of $\mathrm{Ker}(A)$,
where $A$ is the incidence matrix of $G$. This means that any $a\in \mathrm{%
Ker}(A)$ may be written as $a=Kx$ where $x$ is an arbitrary $m=\mathrm{dim}(%
\mathrm{Ker}(A))$-dimensional binary vector. Using this we may rewrite Eq. (%
\ref{eq:w}) over GF(2) as 
\begin{equation}
x^{t}K^{t}QKx=(Kx)^{t}w.  \label{eq:w-new}
\end{equation}%
Since the right-hand side is linear in $x$ for all $x$, the left-hand side
must also be linear in $x$. It follows by Lemma \ref{quad-form} that $K^{t}QK
$ is symmetric, and moreover that the quadratic form $x^{t}K^{t}QKx$ can be
written as $x^{t}\mathrm{diag}(K^{t}QK)$. Thus, solving Eq. (\ref{eq:w}) for 
$w$ is equivalent to solving the linear system $x^{t}K^{t}w+x^{t}\mathrm{diag%
}(K^{t}QK)=0$, or 
\begin{equation}
x^{t}(K^{t}w+\mathrm{diag}(K^{t}QK))=0.
\end{equation}%
Since this equation must be true for all $x$, it follows that $K^{t}w+%
\mathrm{diag}(K^{t}QK)=0$ and hence that $w$ is the solution to%
\begin{equation}
K^{t}w=\mathrm{diag}(K^{t}QK).  \label{eq:w-sol}
\end{equation}%
Since $K$ is efficiently constructable, $w$ can also be found efficiently
using standard methods for solving linear equations over GF(2).

Now for the first claim of the theorem, which states that the circuits
corresponding to the graphs in $\Theta $ are CES. $\Theta $ is a downwardly
closed set of graphs which generates a set of compatible quantum circuits $%
\Omega _{\Theta w}$ via the mapping described above. In turn we have the
corresponding Ising model instances $I(\Omega _{\Theta w})$ and by
assumption, the \emph{G$\Omega $-compatible} bond distributions $w$ for the
circuits $\Omega \in \Omega _{\Theta w}$ and graphs $G\in \Theta $. Lemma %
\ref{lem:planar} states that these instances are planar. It follows that
they are CES, i.e., we may compute the Ising partition function for any of
these instances efficiently with a classical computer. This is due to a
result by Kasteleyn, who gave a classical algorithm for the exact evaluation
of the Ising partition function of any planar graph in the absence of an
external magnetic field \cite{K}. According to our definition of CES quantum
circuits (Definition \ref{def:CES}), all we need is to be able to obtain the
evaluation in a time polynomial in the number of qubits (which translates to
the number of vertices), and the desired number of bits of precision of $%
Z_{\Delta }(\lambda )$, which is achieved by the algorithm given in \cite{K}%
. Now, since $\langle \mathbf{0}\lvert U(\Omega )\rvert \mathbf{0}\rangle
\propto Z_{\Delta }(\lambda )$ [Eq.~(\ref{eq:element})], and for any graph
in $\Theta $ we have an efficient way of classically determining $Z_{\Delta
}(\lambda )$, we are thus able to determine the matrix element $\langle 
\mathbf{0}\lvert U(\Omega )\rvert \mathbf{0}\rangle $ for any quantum
circuit in $\Omega _{\Theta w}$ efficiently. It follows from Lemma \ref%
{lem:decision} that any quantum circuit in $\Omega _{\Theta w}$ which solves
a decision problem is CES.
\end{proof}

We remark that this technique can be used to prove that quantum circuits
which correspond to non-planar classes of graphs for which the Ising
partition function has efficient classical evaluation schemes, e.g., graphs
of bounded tree width, are CES. We suspect that some of the results obtained
in Ref. \cite{Markov:tensor} may be reproduced in this way.

We further remark that due to algorithms for planarity testing, given a
quantum circuit one can test if it belongs to the class $\Omega _{\Theta w}$
of CES quantum circuits. For example, a simple test follows from the
Eulerian criterion of planarity: $|E|\leq 3|V|-6$ where $|E|$ is the number
of edges and $|V|$ is the number of vertices. (This follows from the
application of a handshaking lemma to the famous relation $|F|-|E|+|V|=2$,
where $F$ is the number of faces \cite{Graph}.) Examining the close
relationship between the circuit representation $H$ and the incidence matrix 
$CH$, one can give the restriction 
\begin{equation}
\mathrm{number\hspace{2pt}of\hspace{2pt}gates}\leq 3(\mathrm{number\hspace{%
2pt}of\hspace{2pt}qubits})-6,  \label{euler}
\end{equation}%
provided that the universal gate set consists of rotations about products of
Pauli operations. A circuit $\Omega $ for which Eq.~(\ref{euler})\ holds
generates a planar graph $G$ via the mapping we have described.
Provided Eq. (\ref{eq:w}) has a non-trivial solution $w$, it follows that 
$\Omega $ is $Gw$-compatible, and hence CES by planarity of $G$.

\section{Further characterization of the class of CES\ quantum circuits $%
\Omega _{\Theta w}$}

\label{sparsegraphs} Our motivation in this subsection is to present a
result on CES quantum circuits which allows a comparison to the recent
results presented in \cite{Nest:08} and \cite{jozsa-2008}. In both papers,
results dependent on quantum gates being restricted to nearest neighbor
qubit operations in one dimension are presented. Via our construction we
derive a similar result, but show that the restriction to nearest-neighbor
operations and one dimension can be lifted. We begin with a simple example.

\subsection{Graphs in $\Theta $ are compatible with CES circuits which
include non-nearest-neighbor operations}

Recent work in Ref. \cite{Nest:08} demonstrates that any circuit that is
built out of $X$-rotations and nearest neighbor $Z\otimes Z$ rotations can
be efficiently simulated. Now assume that one is restricted to a class of
planar graphs, $\Theta _{p}$, for which the number of even subgraphs scales
polynomially with the number of vertices.\ This restriction is not necessary
and is introduced merely for simplicity. Let us call the corresponding
set of quantum circuits (under the mapping presented above) $\Omega _{p}$.
Upon inspection of the incidence matrix of a typical graph in $\Theta _{p}$
one sees that even though the majority of incident vertices are nearest
neighbor, there are several that are not, no matter how one labels the
vertices. For example, consider the graph depicted in Fig. \ref{pc}.

\begin{figure}[tph]
\centering
\includegraphics[scale=0.5]{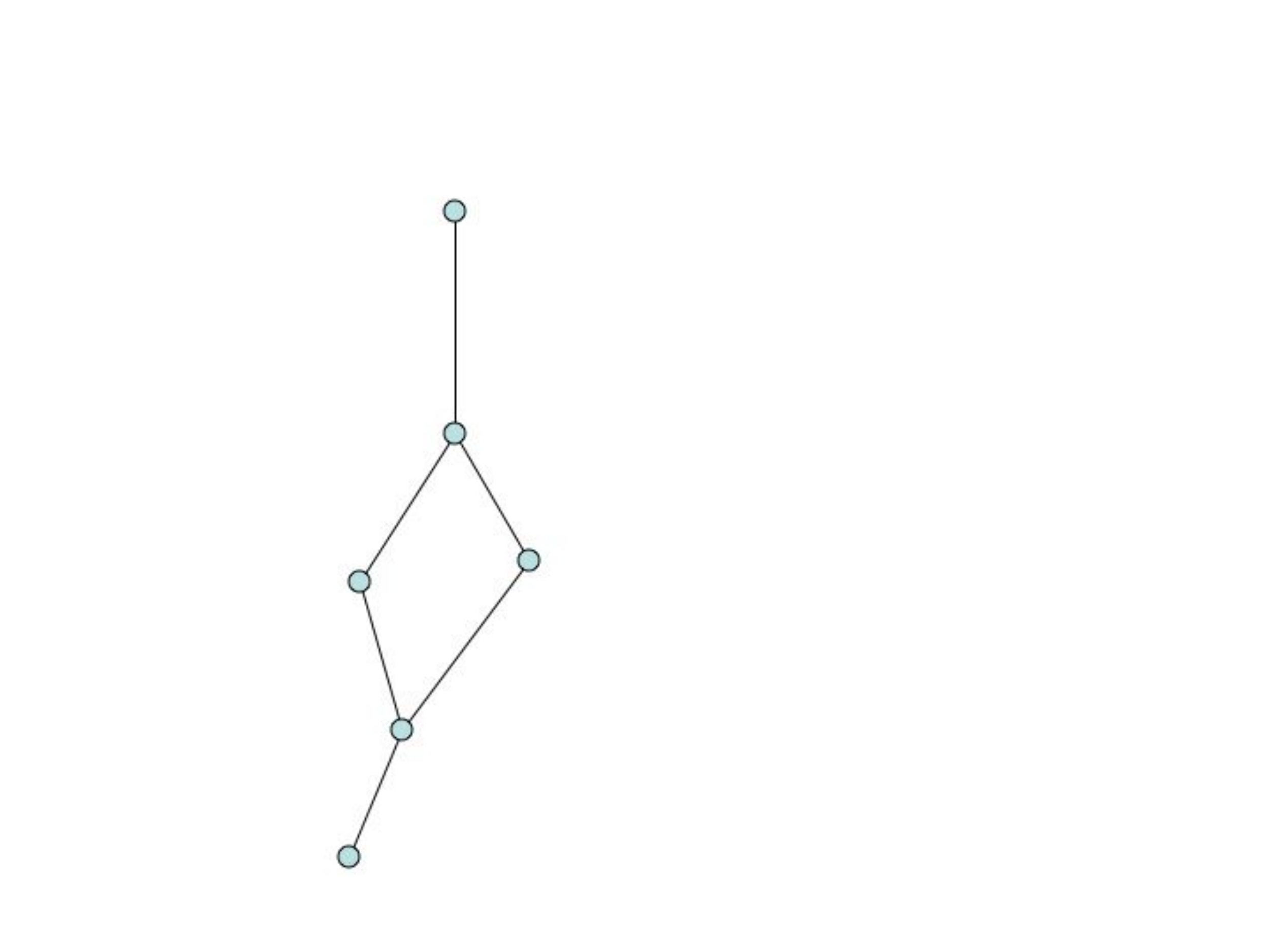}
\caption{A graph with only one cycle.}
\label{pc}
\end{figure}

The incidence matrix is given by 
\begin{equation*}
\left [ 
\begin{array}{cccccc}
1 & 0 & 0 & 0 & 0 & 0 \\ 
1 & 1 & 0 & 0 & 0 & 1 \\ 
0 & 1 & 1 & 0 & 0 & 0 \\ 
0 & 0 & 1 & 1 & 1 & 0 \\ 
0 & 0 & 0 & 1 & 0 & 0 \\ 
0 & 0 & 0 & 0 & 1 & 1%
\end{array}
\right ]
\end{equation*}
and one possible circuit representation $H$ is given by 
\begin{equation*}
\left [ 
\begin{array}{cccccc}
1 & 0 & 0 & 0 & 0 & 0 \\ 
1 & 0 & 0 & 0 & 0 & 0 \\ 
0 & 0 & 0 & 0 & 0 & 1 \\ 
1 & 1 & 0 & 0 & 0 & 1 \\ 
0 & 1 & 1 & 0 & 0 & 0 \\ 
0 & 1 & 1 & 0 & 0 & 0 \\ 
0 & 0 & 0 & 1 & 0 & 0 \\ 
0 & 0 & 1 & 1 & 1 & 0 \\ 
0 & 0 & 0 & 0 & 0 & 0 \\ 
0 & 0 & 0 & 1 & 0 & 0 \\ 
0 & 0 & 0 & 0 & 1 & 0 \\ 
0 & 0 & 0 & 0 & 1 & 1%
\end{array}
\right ]
\end{equation*}

Note that the fifth and sixth columns correspond to gates of the form 
\begin{equation*}
e^{-i\theta (\sigma _{X}^{(4)}\otimes \sigma _{Y}^{(6)})}\hspace{0.6cm}%
\mathrm{and}\hspace{0.6cm}e^{-i\theta (\sigma _{Y}^{(2)}\otimes \sigma
_{X}^{(6)})}
\end{equation*}%
respectively. The superscripts indicate which qubit is being operated on,
and thus one can clearly see that \emph{non-nearest neighbor interactions
are possible}. This example demonstrates that our construction may extend
the results in \cite{Nest:08}, for example, by linking together graphs like
the \textquotedblleft necklace\textquotedblright\ shown. Note, however,
that the nearest-neighbor restriction can only be relieved slightly as most
of the interactions will in fact remain nearest-neighbor. Taking this
example as motivation, what follows is a more\ general construction which
will be used to pursue a better understanding of CES circuits.

\subsection{A class of CES\ circuits with non-nearest neighbor gates}

We now define a simple subclass of planar graphs which have a polynomial (in
the number of vertices) number of even subgraphs. Because of planarity
this class of graphs correspond to CES quantum circuits. However, note that
this class of graphs is by no means an exhaustive characterization of
all planar graphs which have a polynomial number of even subgraphs.

\begin{mydefinition}
A basis of the null space of the incidence matrix $CH$ is referred to as a
cycle basis.
\end{mydefinition}

\begin{mydefinition}
Let $\Theta _{pc}$ be those planar graphs with $V$ vertices and $E=V+O(\log
V^{k})$ edges, where $k\in \mathbb{R^{+}}$.
\end{mydefinition}

\begin{myproposition}
\label{polyinV} $\Theta _{pc}$ has a polynomial, in $V$, number of even
subgraphs.
\end{myproposition}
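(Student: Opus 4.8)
The plan is to count the even subgraphs through the dimension of the cycle space over $GF(2)$, reducing the combinatorial claim to a rank computation for the incidence matrix. As established in the discussion following Definition~\ref{def:even}, the even (Eulerian) subgraphs of $G$ are precisely the elements of $\ker A$, the nullspace of the incidence matrix $A$ taken modulo $2$. Since $\ker A$ is a linear subspace of $GF(2)^{E}$, the number of even subgraphs equals $2^{\nu}$, where $\nu=\dim\ker A$ is the $GF(2)$-nullity of $A$. The entire proposition therefore reduces to showing that $\nu=O(\log V)$.

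First I would determine $\nu$ in terms of the combinatorial data of $G$. Over $GF(2)$ the rank of the vertex--edge incidence matrix of a graph with $c$ connected components is $V-c$: within each component the sum of all vertex rows vanishes (every edge column contains exactly two ones), giving one dependence per component, and one checks these are the only ones. By the rank--nullity theorem,
\begin{equation}
\nu = E-\mathrm{rank}(A) = E-V+c ,
\end{equation}
which is just the circuit rank (first Betti number) of $G$; for a connected graph this is $E-V+1$.

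Next I would substitute the defining edge bound of $\Theta_{pc}$, namely $E=V+O(\log V^{k})$. For connected $G$ this gives $\nu=E-V+1=O(\log V^{k})$, so that the number of even subgraphs is
\begin{equation}
2^{\nu}=2^{\,O(\log V^{k})}=\mathrm{poly}(V),
\end{equation}
using $2^{\log_2 V^{k}}=V^{k}$ (changing the base of the logarithm only rescales the exponent by a constant, absorbing $k$ into the degree). This is polynomial in $V$, with degree controlled by $k$, which is exactly the claim.

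The step requiring the most care is the role of the number of connected components $c$. The polynomial bound rests on $\nu=E-V+c$ being $O(\log V)$, but the edge condition $E=V+O(\log V^{k})$ controls only $E-V$ and not $c$. If $G$ is permitted to be disconnected the conclusion can fail: a disjoint union of triangles has $E=V$ yet $\nu=V/3$ and hence $2^{V/3}$ even subgraphs. The argument must therefore restrict to connected graphs (equivalently, to $c=O(\log V)$), which is the relevant regime for the interaction graphs of the circuits under consideration. Aside from flagging this hypothesis explicitly, the proof is elementary, resting only on the identification of even subgraphs with $\ker A$ and on the standard $GF(2)$ rank of an incidence matrix.
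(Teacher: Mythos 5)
Your proof is correct and follows essentially the same route as the paper's: both identify the even subgraphs with $\ker A$ over $GF(2)$, compute the nullity via rank--nullity using $E=V+O(\log V^{k})$ and connectedness to get nullity $O(k\log V)$, and conclude there are $2^{O(k\log V)}=\mathrm{poly}(V)$ even subgraphs. Your explicit flagging of the connectivity hypothesis (with the disjoint-triangles counterexample) is a worthwhile sharpening, since the paper assumes $c=1$ only in passing (``which is one in our case'') and in fact misstates the rank of the incidence matrix as edges minus components rather than vertices minus components, although it uses the correct value $V-1$ in its computation.
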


\begin{proof}
A cycle basis consists of a set of connected even subgraphs of a given graph
in $\Theta _{pc}$. The dimension of the null space (or the number of
elements of the cycle basis) is in this case equal to the number of edges
minus the rank of $CH$. Recall also that the rank of the incidence matrix
equals the number of edges minus the number of components (which is one in
our case). Thus, asymptotically, one has, 
\begin{equation}
\mathrm{nullity}=V+O(k\log V)-\mathrm{rank}(CH)=O(\log V^{k}).
\end{equation}%
Now, note that the null space allows all possible sums of the basis and
therefore we are left with $O(V^{k})$ elements as claimed.
\end{proof}

One could imagine graphs in $\Theta _{pc}$ as being sparse graphs consisting
of cycles (even subgraphs) strung together along trees without too many
branching points. This is due to the relationship between $E$ and $V$ given
above. One can see that a branch without a cycle always adds an additional
vertex (one edge has two vertices) and the only way that the relationship
between vertices and edges can be satisfied is if the number of branches is
kept smaller than the number of cycles. That is, there will need to be \emph{%
more cycles than edges that do not terminate at a cycle}. Further, the
incidence matrix of these structures, like the example above in Fig.~\ref{pc}%
, will have columns that consist of nearest neighbor consecutive
\textquotedblleft 1's\textquotedblright\ for the majority of positions. This
rule is broken when a tree branches and when one runs into a cycle. By
Theorem \ref{polyinV}, this can only happen $O(\log V^{k})$ times. As $E$ is
the number of gates and $V$ is the number of qubits in the corresponding
quantum circuit, we have just proven:

\begin{mycorollary}
A quantum circuit consisting of gates of the form $e^{-i\theta
(X^{(i)}\otimes Y^{(j)})}$, which act on nearest neighbor qubits except for $%
O(k\log (\#\mathrm{of}\hspace{0.08cm}\mathrm{qubits}))$ gates, which can act
on qubits $i$ and $j$ such that $|i-j|\geq 2$, is CES.
\end{mycorollary}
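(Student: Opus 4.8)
The plan is to read the corollary as a dictionary translation of the graph-theoretic facts already established for $\Theta_{pc}$ into the language of circuits, borrowing the efficient-simulation conclusion wholesale from Theorem \ref{th}. First I would recall the edge-to-gate correspondence underlying the mapping of Section \ref{sec:mapping}: each column of the incidence matrix $A$ (equivalently $CH$) carries exactly two nonzero entries, one per endpoint of an edge $(i,j)\in E$, and the constraint that every column of $H$ carry one $Y$-operation and at most one $X$-operation forces the associated gate to have the form $e^{-i\theta(\sigma_X^{(i)}\otimes\sigma_Y^{(j)})}$ (or with $X,Y$ interchanged), acting on the two qubits labeled by the endpoints of that edge. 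Thus an edge joining vertices $i$ and $j$ becomes a gate whose range is exactly $|i-j|$, and a gate is nearest-neighbor precisely when its edge joins consecutively-labeled vertices.

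Next I would do the counting. A graph $G\in\Theta_{pc}$ is, by the description accompanying its definition, a long path (backbone) decorated with a few cycles and short branches. Labeling the vertices $1,2,\dots,V$ consecutively along this backbone makes every backbone edge join a consecutively-labeled pair, hence a nearest-neighbor gate with $|i-j|=1$. The edges off the backbone number $E-(V-1)=O(\log V^{k})$ by the defining relation $E=V+O(\log V^{k})$, and only these can join non-consecutive vertices, giving the $O(\log V^{k})=O(k\log V)$ gates with $|i-j|\geq 2$. Identifying $E$ with the number of gates and $V$ with the number of qubits (two rows of $H$ per qubit, one column per gate) completes the translation.

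Finally I would invoke simulatability. Graphs in $\Theta_{pc}$ are planar by construction, so by Lemma \ref{lem:planar} they lie in $\Theta$ and the associated circuits belong to $\Omega_{\Theta w}$; Theorem \ref{th} then gives that these circuits are CES. Equivalently one may cite Eq.~(\ref{eq:element}) together with Kasteleyn's efficient evaluation of the planar Ising partition function to conclude that $|\langle\mathbf{0}\lvert U(\Omega)\rvert\mathbf{0}\rangle|$ is obtainable to $k$ digits in time $\mathrm{poly}(V,k)$. Proposition \ref{polyinV} is not strictly needed for CES, which follows from planarity alone, but it certifies the sparse structure underlying the nearest-neighbor count.

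The step I expect to be the real obstacle is the counting, specifically the claim that the vertices can be labeled so that only $O(\log V^{k})$ edges are non-nearest-neighbor. The bound $E-(V-1)=O(\log V^{k})$ controls the cycle rank but does not by itself control branching of the spanning tree: a tree with many leaves (e.g.\ a star) has cycle rank zero yet admits no linear labeling making most edges nearest-neighbor. The argument therefore leans on the additional ``cycles strung along a path'' structure implicit in the description of $\Theta_{pc}$ — namely that the backbone can be taken to be a Hamiltonian path, so that branch points and cycle-closing edges together number only $O(\log V^{k})$. Making this precise, or restricting $\Theta_{pc}$ so that it holds, is where the care is needed; the remainder is bookkeeping plus the already-proved Theorem \ref{th}.
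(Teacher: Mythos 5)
Your proposal follows essentially the same route as the paper's own argument: identify graph edges with two-qubit $X\otimes Y$ rotations via the $H$-matrix/incidence-matrix correspondence $CH=A$, count the non-nearest-neighbor gates through the edge excess $E-(V-1)=O(\log V^{k})$ together with the ``cycles strung along a backbone'' picture of $\Theta_{pc}$, and conclude CES from planarity via Theorem~\ref{th} (i.e., Kasteleyn plus Lemma~\ref{lem:decision}), with Proposition~\ref{polyinV} playing only a certifying role. The obstacle you flag in the counting step --- that the bound on cycle rank does not by itself control spanning-tree branching, since a star satisfies $E=V+O(\log V^{k})$ yet admits no vertex labeling making more than $O(1)$ of its edges nearest-neighbor --- is genuine, but it is equally present in the paper, whose justification (the claim that ``the number of branches is kept smaller than the number of cycles,'' invoked alongside Proposition~\ref{polyinV}) is no more rigorous than your account; so your proposal matches, rather than falls short of, the published argument.
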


Note: $Z$ operations may be included in the exponent of this operator at any
position not occupied by the $X$ and $Y$ operations.

The important point in the last corollary is that we allow non-nearest
neighbor gates, thus extending the results of \cite{Nest:08}, and also \cite%
{Jozsa:08}, where only nearest-neighbor CES\ quantum circuits were
considered.

\subsection{$\Theta $ includes some but not all outerplanar graphs}

So far we have stressed the special role of $K_{3,3}$ and $K_{5}$, which led
us to the conclusion that all graphs in $\Theta $ are planar. However, it
turns out that we can be more specific, since we have also been able to show
that $K_{4}$ and $\bar{K}_{3,3}$ are forbidden minors for $\Theta $ (Lemma %
\ref{lem:obs}; see Appendix \ref{app:algo} for a description of the proof,
using mathematical software). In other words, there does not exist a
solution to Eq.~(\ref{eq:w}) for the graphs $K_{4}$ and $\bar{K}_{3,3}$.
These graphs play a role in characterizing the set of \emph{outerplanar
graphs} \cite{Graph}, which we define next:

\begin{mydefinition}
For any planar graph, there are regions bounded by the cycles of the graph
and an unbounded region outside of all the cycles. An outerplanar graph is a
planar graph for which every vertex is within the unbounded region when it
is embedded in the plane such that no edges intersect.
\end{mydefinition}

For example, the graph in Fig.~\ref{pc} is outerplanar. More informally, a
graph is outerplanar if it can be embedded in the plane such that all
vertices lie on the outer (exterior) face. A graph $G$ is outerplanar iff $%
K_{1}+G$ (a new vertex is connected to all vertices of $G$) is planar \cite%
{Wiegers:06}. The characterization of relevance to us is the following
analog of Kuratowski's theorem for planar graphs (described in Appendix~\ref%
{app:graphs}):

\begin{mytheorem}[Chartrand \& Harrary \protect\cite{Chartrand:67}]
\label{th:OP}A graph is outerplanar if and only if it has no subgraph
homeomorphic to $K_{4}$ or $K_{2,3}$.
\end{mytheorem}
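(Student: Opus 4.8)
The plan is to deduce Theorem~\ref{th:OP} from Kuratowski's theorem by means of the join characterization already noted above: $G$ is outerplanar if and only if $K_{1}+G$ (the graph obtained by adjoining a single apex vertex $v$ adjacent to every vertex of $G$) is planar. By Kuratowski's theorem, $K_{1}+G$ is planar if and only if it contains no subgraph homeomorphic to $K_{5}$ or $K_{3,3}$. Hence it suffices to prove that $K_{1}+G$ contains a subdivision of $K_{5}$ or $K_{3,3}$ if and only if $G$ contains a subdivision of $K_{4}$ or $K_{2,3}$. The two numerical coincidences driving the whole argument are $K_{1}+K_{4}=K_{5}$ and $K_{1}+K_{2,3}\supseteq K_{3,3}$ (taking the bipartition of $K_{3,3}$ to be the two degree-two vertices of $K_{2,3}$ together with $v$ against the three degree-two vertices on the other side).

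First I would dispatch the necessity (``only if'') direction. Outerplanarity is clearly preserved under passing to subgraphs and under suppressing degree-two vertices, so any graph containing a subdivision of a non-outerplanar graph is itself non-outerplanar; it therefore suffices to check that $K_{4}$ and $K_{2,3}$ are \emph{not} outerplanar. This is immediate from the join characterization together with the coincidences above: $K_{1}+K_{4}=K_{5}$ and $K_{1}+K_{2,3}\supseteq K_{3,3}$ are both non-planar, so neither $K_{4}$ nor $K_{2,3}$ is outerplanar.

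For the sufficiency direction I would argue the contrapositive: assuming $K_{1}+G$ is non-planar, I produce a subdivision of $K_{4}$ or $K_{2,3}$ inside $G$. Fix a subdivision $S$ of $K_{5}$ or of $K_{3,3}$ contained in $K_{1}+G$, chosen to use as few edges as possible, and examine the role of the apex vertex $v$. If $v\notin S$, then $S\subseteq G$, so $G$ already contains a $K_{5}$- or $K_{3,3}$-subdivision and hence a $K_{4}$- or $K_{2,3}$-subdivision. If $v$ is a branch (degree $\ge 3$) vertex of $S$, then by internal disjointness the paths joining the remaining branch vertices avoid $v$ and therefore lie in $G$; since $K_{5}-v=K_{4}$ and $K_{3,3}-v=K_{2,3}$, these paths realize the required subdivision in $G$.

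The main obstacle is the remaining case, in which $v$ occurs as an interior (degree-two) vertex on some subdivided edge of $S$, say on the branch-path joining branch vertices $p$ and $q$. Here $v$ cannot simply be deleted, since the two subpaths meeting at $v$ terminate at $G$-vertices $s$ and $t$ that need not be adjacent in $G$. The resolution exploits the defining feature of the apex, namely that $v$ is adjacent to \emph{every} vertex of $G$: I would use the minimality of $S$ to rule out this configuration, showing that if $v$ were interior then one could reroute the connection through $G$ or promote $v$ to a branch vertex, contradicting minimality and thereby reducing to one of the previous cases. Carrying out this rerouting while preserving the internal disjointness of all branch-paths is the technical heart of the argument and the step most in need of care.
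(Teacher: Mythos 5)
The paper does not prove this statement---it is quoted from Chartrand and Harary \cite{Chartrand:67}---so your attempt must stand on its own, and it almost does: the reduction through the join characterization ($G$ outerplanar iff $K_{1}+G$ planar) plus Kuratowski is the standard route, the coincidences $K_{1}+K_{4}=K_{5}$ and $K_{3,3}\subseteq K_{1}+K_{2,3}$ are correctly identified (though note a small slip: the two vertices of $K_{2,3}$ that you join with the apex $v$ to form one side of $K_{3,3}$ are its degree-\emph{three} vertices, not degree-two), and the cases $v\notin S$ and $v$ a branch vertex are handled correctly. The genuine gap is the interior-vertex case, which you defer to a minimality-plus-rerouting argument that does not work as described. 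Edge-minimality of $S$ does buy you something---since $v$ is adjacent to every vertex of $G$, a longer detour through $v$ could be shortcut via the apex edges, so minimality forces the branch path through $v$ to be exactly $p\,v\,q$---but it cannot \emph{eliminate} the configuration: rerouting from $p$ to $q$ inside $G$ would require a $p$--$q$ path in $G$ internally disjoint from the rest of $S$, and nothing guarantees one exists (indeed, when $G$ itself is planar, every Kuratowski subdivision in $K_{1}+G$ must pass through $v$); ``promoting $v$ to a branch vertex'' likewise has no stated mechanism. So the step you flag as the technical heart is not merely in need of care---the proposed tool is the wrong one.

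The case closes by a much simpler move: discard the entire branch path containing $v$ rather than trying to repair it. If $S$ subdivides $K_{5}$ with branch vertices $a_{1},\dots,a_{5}$ and $v$ is interior to the $a_{1}$--$a_{2}$ path, then the six branch paths joining $\{a_{1},a_{2}\}$ to $\{a_{3},a_{4},a_{5}\}$ avoid $v$ (by internal disjointness), hence lie in $G$, and they constitute a subdivision of $K_{2,3}$ with two-side $\{a_{1},a_{2}\}$. If $S$ subdivides $K_{3,3}$ with parts $\{a_{1},a_{2},a_{3}\}$ and $\{b_{1},b_{2},b_{3}\}$ and $v$ is interior to the $a_{1}$--$b_{1}$ path, the six paths from $\{a_{2},a_{3}\}$ to $\{b_{1},b_{2},b_{3}\}$ give a $K_{2,3}$ subdivision in $G$ the same way. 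No minimality assumption and no use of the apex adjacency is needed in this case at all; with this substitution your proof is complete.
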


In other words, $K_{2,3}$ is a forbidden minor for outerplanar graphs, where 
$K_{2,3}$ is like $K_{3,3}$ except that one side of the bipartite graph has
two vertices instead of three.

\begin{myproposition}
\label{prop:K33-1}If a graph is outerplanar then it does not have $\bar{K}%
_{3,3}$ as a minor.
\end{myproposition}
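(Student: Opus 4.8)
The plan is to prove the contrapositive by exploiting transitivity of the minor relation, reducing everything to the single forbidden minor $K_{2,3}$ that characterizes outerplanarity in Theorem~\ref{th:OP}. Concretely, I would show that $\bar{K}_{3,3}$ itself contains $K_{2,3}$, so that any graph admitting $\bar{K}_{3,3}$ as a minor admits $K_{2,3}$ as a minor, and is therefore not outerplanar.

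The first step is a direct structural observation. Label the bipartition of $K_{3,3}$ as $\{a_1,a_2,a_3\}$ and $\{b_1,b_2,b_3\}$, and (without loss of generality, by the relabeling remark in Lemma~\ref{lem:obs}) take the deleted edge to be $a_3b_3$, so that $\bar{K}_{3,3}$ retains every edge $a_ib_j$ except $a_3b_3$. Then the two vertices $a_1,a_2$ together with $b_1,b_2,b_3$ induce a complete bipartite graph: all six edges $a_ib_j$ with $i\in\{1,2\}$ and $j\in\{1,2,3\}$ are present, since the lone deletion only touched $a_3$. Hence $K_{2,3}$ is a subgraph, and \emph{a fortiori} a minor, of $\bar{K}_{3,3}$.

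The second step is transitivity: because the minor relation is transitive and $K_{2,3}$ is a minor of $\bar{K}_{3,3}$, any graph $G$ that has $\bar{K}_{3,3}$ as a minor must also have $K_{2,3}$ as a minor. The third step invokes the characterization of outerplanarity. By Theorem~\ref{th:OP}, and the observation recorded immediately after it that $K_{2,3}$ is a forbidden minor for outerplanar graphs, an outerplanar graph contains no $K_{2,3}$ minor. Combining this with the previous step, an outerplanar $G$ cannot contain $\bar{K}_{3,3}$ as a minor, which is the claim.

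The only delicate point — and the step I would treat most carefully — is the passage between the ``subgraph homeomorphic to $K_{2,3}$'' formulation of Theorem~\ref{th:OP} and the statement in terms of $K_{2,3}$ as a \emph{minor}. These notions (topological minor versus minor) coincide precisely because $K_{2,3}$ has maximum degree three: for any graph with maximum degree at most three, possessing it as a minor is equivalent to containing a subdivision of it as a subgraph. I would include this one-line justification so that the reduction to Theorem~\ref{th:OP} is airtight; everything else is immediate from the subgraph inclusion in the first step.
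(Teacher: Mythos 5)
Your proof is correct and takes essentially the same route as the paper's: both deduce a $K_{2,3}$ minor from a $\bar{K}_{3,3}$ minor and then invoke Theorem~\ref{th:OP} to conclude. If anything, yours is slightly tighter --- you exhibit $K_{2,3}$ as an actual subgraph of $\bar{K}_{3,3}$ (the paper merely contracts one edge and deletes another), and you explicitly justify passing from the ``subgraph homeomorphic to $K_{2,3}$'' phrasing of Theorem~\ref{th:OP} to the minor formulation via the maximum-degree-three equivalence, a step the paper leaves implicit.
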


\begin{proof}
Assume that $\bar{K}_{3,3}$ is a minor of some $G\in \mathrm{Outerplanar}$.
Then $G$ has $K_{2,3}$ as a minor, since $K_{2,3}$ is a minor of $\bar{K}%
_{3,3}$. (This is easy to see: just contract one edge and delete another.)
But by Theorem \ref{th:OP} outerplanar graphs cannot have $K_{2,3}$ as a
minor, which is a contradiction. Thus $G$ cannot be outerplanar.
\end{proof}

Note that the converse is not necessarily true, i.e., not all graphs which
do not have $\bar{K}_{3,3}$ as a minor are outerplanar.

\begin{myproposition}
$\mathrm{Outerplanar\hspace{0.08cm}Graphs}\cap \Theta \neq \emptyset$ and $%
\mathrm{Outerplanar\hspace{0.08cm}Graphs}\neq \Theta $.
\end{myproposition}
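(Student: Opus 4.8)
The plan is to prove the two assertions independently, each by exhibiting a single explicit witness graph. For the first assertion, that the set of outerplanar graphs meets $\Theta$, I would take the graph $G$ of Fig.~\ref{pc} (equivalently one could take the triangle $K_{3}$). That $G$ is outerplanar is clear from its drawing, in which every vertex lies on the outer face; formally it has no subgraph homeomorphic to $K_{4}$ or $K_{2,3}$, so Theorem~\ref{th:OP} applies. To see that $G\in\Theta$ I would use the solvability criterion distilled in the proof of Theorem~\ref{th}: with $A$ the incidence matrix and $K$ a matrix whose columns span $\ker A$, the graph lies in $\Theta$ iff the linear system~(\ref{eq:w-sol}), $K^{t}w=\mathrm{diag}(K^{t}QK)$, is solvable, which by Lemma~\ref{quad-form} holds whenever $K^{t}QK$ is symmetric over $GF(2)$. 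Since $G$ has a single cycle its cycle space is one-dimensional, so $K^{t}QK$ is a $1\times1$ matrix, hence trivially symmetric; thus~(\ref{eq:w-sol}) is solvable and $G\in\Theta$. This already gives $\mathrm{Outerplanar}\cap\Theta\neq\emptyset$.

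For the second assertion I would exhibit a graph lying in $\Theta$ but not in the outerplanar class, namely $K_{2,3}$. Non-outerplanarity is immediate from Theorem~\ref{th:OP}: a graph is outerplanar iff it has no subgraph homeomorphic to $K_{4}$ or $K_{2,3}$, and $K_{2,3}$ contains itself. The substantive point is $K_{2,3}\in\Theta$, which (as already remarked in the text) I would establish by the same route as above: construct the incidence matrix $A$ of $K_{2,3}$, pick a compatible circuit matrix $H$ with $\tilde{A}=CH$, form $Q=\mathrm{lwtr}(H^{t}\tilde{A})$ and a cycle basis $K$, and check explicitly that $K^{t}QK$ is symmetric so that~(\ref{eq:w-sol}) returns an admissible bond distribution $w$. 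Since $K_{2,3}$ has neither $K_{5}$ nor $K_{3,3}$ as a minor it is planar, consistent with Lemma~\ref{lem:planar}. This produces a graph in $\Theta\setminus\mathrm{Outerplanar}$ and hence $\mathrm{Outerplanar}\neq\Theta$.

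The main obstacle is the explicit verification that $K_{2,3}\in\Theta$. Unlike the single-cycle witness of part one, $K_{2,3}$ has a two-dimensional cycle space (its nullity is $|E|-\mathrm{rank}(A)=6-4=2$), so symmetry of the $2\times2$ matrix $K^{t}QK$ is a genuine condition rather than an automatic one, and one must actually produce a compatible $H$ for which it holds. Everything else follows directly from results already in hand: Theorem~\ref{th:OP} for both outerplanarity statements, Lemma~\ref{quad-form} and the solvability analysis of Theorem~\ref{th} for $\Theta$-membership, and Lemma~\ref{lem:planar} for consistency. I would emphasize that this argument proves precisely the stated inequality by placing a \emph{non}-outerplanar graph inside $\Theta$; establishing the complementary inclusion failure (an outerplanar graph lying outside $\Theta$) would require obstruction minors of $\Theta$ beyond $K_{4}$ and $\bar{K}_{3,3}$, which the excerpt leaves open, so I would not pursue it here.
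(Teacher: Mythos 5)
Your proposal is correct and structurally mirrors the paper's proof: both arguments are witness-based, with the same second witness $K_{2,3}$ (non-outerplanar by Theorem~\ref{th:OP}, claimed to lie in $\Theta$) and an outerplanar \textquotedblleft tree with a cycle\textquotedblright\ for the first claim. The interesting difference is methodological. The paper settles \emph{both} membership claims by computer search (the algorithm of Appendix~\ref{app:algo}; the proof says the memberships were \textquotedblleft verified using mathematical software\textquotedblright), whereas you derive the first one analytically: since the columns of $K$ form a basis of $\ker A$, the matrix $K^{t}$ has full row rank, so by Lemma~\ref{quad-form} solvability of Eq.~(\ref{eq:w}) reduces exactly to symmetry of $K^{t}QK$ and then Eq.~(\ref{eq:w-sol}) is automatically consistent; for a unicyclic graph such as that of Fig.~\ref{pc} the nullity is $|E|-(|V|-1)=1$, so $K^{t}QK$ is $1\times 1$ and trivially symmetric, giving membership in $\Theta$ for \emph{every} compatible $H$ with no computation at all. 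This buys a machine-free and more illuminating proof of $\mathrm{Outerplanar}\cap \Theta \neq \emptyset$ (and in fact shows every unicyclic graph is in $\Theta$), which is a genuine improvement over the paper's appeal to software. For $K_{2,3}$ you correctly diagnose the substantive point: its cycle space is two-dimensional ($6-4=2$), so symmetry of the $2\times 2$ matrix $K^{t}QK$ is a real condition requiring an explicit compatible $H$; you leave that verification outstanding, but the paper does no better, delegating precisely this check to its software, so your proposal meets the paper's own standard of rigor there. Your closing observation is also logically sound and matches the paper: a single non-outerplanar member of $\Theta$ establishes $\Theta \neq \mathrm{Outerplanar}$, and neither you nor the paper claims the complementary non-inclusion, which would require knowledge of further forbidden minors of $\Theta$ beyond $K_{4}$ and $\bar{K}_{3,3}$.
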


\begin{proof}
By Lemma \ref{lem:obs}, if $G\in \Theta $ then $G$ cannot have $K_{4}$ or $%
\bar{K}_{3,3}$ as minors. By Theorem \ref{th:OP} and Proposition \ref%
{prop:K33-1}, if $G^{\prime }$ is an outerplanar graph then it cannot have $%
K_{4}$ or $\bar{K}_{3,3}$ as minors either. This suggests that $\Theta $ may
have graphs in common with the set of outerplanar graphs. We have verified,
using mathematical software, that the intersection is indeed nonempty. For
example, we have found that certain trees with cycles, which are outerplanar
by construction, are in $\Theta $. Moreover, we have verified using
mathematical software that $K_{2,3}\in \Theta $, i.e., it is not a forbidden
minor for the existence of a solution $w$. But $K_{2,3}$ is not outerplanar,
hence there are graphs in $\Theta $ which are not outerplanar (in
particular, all subdivisions of $K_{2,3}$).
\end{proof}

This is interesting since some problems are NP-complete for subclasses of
planar graphs but solvable in polynomial time for outerplanar graphs. Some
examples are the chromatic number, Hamiltonian path and Hamiltonian circuit.
Another example is the page number which is one for outerplanar graphs. This
means that we can embed the vertices on a line which divides the plane into
two subplanes, and draw all edges in one of the subplanes without crossing.
In a sense, outerplanar graphs are \textquotedblleft easy\textquotedblright\
computationally. The fact that $\Theta $ includes non-outerplanar graphs
thus suggests that it may include interesting computational problems.

\section{Discussion and Future Directions}

\label{nextstep}

In this section we briefly discuss two possible future directions for
research.

\subsection{General condition for the bond distribution}

\label{nexta} So far we have assumed the sufficient condition (\ref{eq:w})
in order to obtain the desired equality between the partition function and
the circuit matrix element $\langle \mathbf{0}\lvert U(\Omega )\rvert 
\mathbf{0}\rangle $ [Eqs.~(\ref{eq:Z1})\ and (\ref{0U0})]. We have also
shown that a satisfying bond distribution $w$ can always be efficiently
tested for an computed, under Eq. (\ref{eq:w}). Let us now relax the
constraint of Eq.~(\ref{eq:w}) by considering a more general way in which
the desired proportionality,%
\begin{equation}
\sum_{a\in \ker A}(-1)^{a\cdot w}\lambda ^{|a|}\propto \sum_{a\in \ker
CH}(-1)^{a^{t}\mathrm{lwtr}(H^{t}CH)a}\lambda ^{|a|},  \label{equal}
\end{equation}%
can be obtained. Indeed, Eq.~(\ref{eq:w}) is clearly not a necessary
condition. The following construction demonstrates that it is likely that
the number of cases which \emph{do not have a solution $w$} for the bond
distribution is much smaller than the case we analyzed given by Eq.~(\ref%
{eq:w}).

Note that in Eq.~(\ref{equal}) the powers of the $\lambda $'s are the
weights of the null vectors $a$, that is the number of ones in $a$. Thus it
is possible for an equality to occur for a given term in the sum in the two
sides of Eq.~(\ref{equal}) for different $a$'s, as long as the weights of
the $a$'s are equal. This gives us the constraint for the following. One can
organize all the $a$'s in bins in terms of weights from $1$ to $|E|=N$. Let
us now take bin $r$, i.e., the set of vectors of weight $r$. Let $%
a_{r1},\dots ,a_{rn}$ be all the null vectors of $CH$ of weight $r$. Then,
if for all $r$

\begin{eqnarray}
\{a_{rj_{1}}^{t}\mathrm{lwtr}(H^{t}CH)a_{rj_{1}} &=&a_{rj_{2}}\cdot
w\}\wedge   \notag \\
\{a_{rj_{2}}^{t}\mathrm{lwtr}(H^{t}CH)a_{rj_{2}} &=&a_{rj_{3}}\cdot
w\}\wedge \cdots \wedge   \notag \\
\{a_{rj_{n}}^{t}\mathrm{lwtr}(H^{t}CH)a_{rj_{n}} &=&a_{rj_{1}}\cdot w\},
\label{conj}
\end{eqnarray}%
where $\{j_{1},...,j_{n}\}$ is any permutation of the numbers $\{1,...,n\}$,
then Eq.~(\ref{equal}) would be satisfied. Clearly, Eq.~(\ref{eq:w}) is a
special case of this more general condition. This demonstrates that it is
likely that a satisfying bond distribution $w$ can be found for a given
graph even if the sufficient condition (\ref{eq:w}) cannot be satisfied.
Loosely, this is due the fact that there are many conjunctive statements of
the form (\ref{conj}) that can be satisfying. In fact it seems likely that
also non-planar graphs (i.e., those containing $K_{5}$ or $K_{3,3}$ as
minors) may have a satisfying $w$ according to Eq.~(\ref{conj}). We leave
this as a problem for future investigation. An important point
concerning the more general condition (\ref{conj}) is that we do not know if
it can be efficiently tested for a satisfying bond distribution $w$, let
alone solved for such a $w$.

\subsection{Computing the Ising partition function}

As mentioned in Section \ref{sec:ansatz}, Eq. (\ref{eq:element}) has two
consequences, and our focus in this paper has been on the ability to find
CES circuits using known results about the hardness of computing partition
functions. Let us now briefly consider the other consequence, namely the
fact that if we are able to determine $\langle \mathbf{0}\lvert U(\Omega
)\rvert \mathbf{0}\rangle $, then we are able to determine the partition
function $Z_{\Delta }(\lambda )$.

A fully-polynomial randomized approximation scheme (fpras) for the
fully-ferromagnetic Ising partition function was presented in \cite%
{Jerrum:90}. It is well known that having an fpras for the non-ferromagnetic
Ising model implies that $NP=RP$ (randomized polynomial time) which would be
quite unexpected \cite{Welsh:book}. It should therefore be of no surprise
that no fpras for this problem has been found, even with quantum resources.
However \emph{additive} approximation schemes seem likely and in fact one
was given in \cite{dorit-tutte} for the related Potts model partition
function, even though the instances that they were able to account for are
not known to be BQP-complete and the hardness is in fact unknown.

Equation (\ref{eq:element}) precisely relates a matrix element of a
quantum circuit with the value of the partition function of the Ising model
for a corresponding graph instance. This means that if we could
approximate the matrix element, we would have an approximation for the Ising
partition function. Due to the Hadamard test, it is well known that a
polynomial estimation of this matrix element is BQP-complete. (See Ref. \cite%
{jordan} for a description of the Hadamard test.) Specifically, by making $%
1/\epsilon ^{2}$ measurements, one can either have $\mathrm{Re}\langle 
\mathbf{0}\lvert U(\Omega )\rvert \mathbf{0}\rangle $ or $\mathrm{Im}\langle 
\mathbf{0}\lvert U(\Omega )\rvert \mathbf{0}\rangle $ to precision $\epsilon 
$, but one must keep in mind that this approximation is an additive one.
This means that with some probability of success bounded below (say by $.75$%
) the approximation returns $m$ such that 
\begin{equation}
\langle \mathbf{0}\lvert U(\Omega )\rvert \mathbf{0}\rangle -\delta \cdot
p<m<\langle \mathbf{0}\lvert U(\Omega )\rvert \mathbf{0}\rangle +\delta
\cdot p,  \label{eq:approx}
\end{equation}%
where $p$ is a polynomially small parameter and $\delta $ is the
approximation scale of the problem. Note that if $\delta =O(\langle \mathbf{0%
}\lvert U(\Omega )\rvert \mathbf{0}\rangle )$ then the approximation will be
an fpras \cite{dorit-tutte}. Equations (\ref{eq:element}) and (\ref%
{eq:approx}) taken together quantify how a measurement of $\langle \mathbf{0}%
\lvert U(\Omega )\rvert \mathbf{0}\rangle $ yields an approximation of the
partition function of the Ising model instance $\Delta $ corresponding to
the circuit $\Omega $.

\section{Conclusions}

\label{sec:conc}

We have provided a construction that allows one to determine if a given
quantum circuit corresponds to a class of quantum circuits which are
classically efficiently simulatable (CES). This was done by looking at the
corresponding graph instances of the classical Ising model using a mapping
previously introduced in \cite{JOE2}. This was then used to conclude that
any class of quantum circuits which solve decision problems and are
restricted to certain planar graph instances are CES. Our main result is
stated in Theorem \ref{th}, which characterizes the class of CES\ circuits
via the set of planar graphs $\Theta $. We have given a partial
characterization of $\Theta $ by stating that its obstruction set includes
$\bar{K}_{3,3}$ and $K_{4}$ (and hence, by downward closure also $%
K_{3,3} $ and $K_{5}$). An interesting open problem is to give a complete
characterization of the obstruction set; we know from the Robertson Seymour
Theorem that this set is finite, since we have proved that $\Theta $ is
downwardly closed.

Our mapping can also be used to construct a quantum algorithm for the
additive approximation of the partition function. However, there are two
issues. The instances we are able to handle are constrained by our use of
equation (\ref{eq:w}) which does not capture all the ways a certain bond
distribution may satisfy equation (\ref{equal}), but which simplifies our
analysis greatly. The other issue is the fact that our mapping may fail to
provide information about the bond distribution of a given graph.

An open problem is to obtain a better understanding of what the complexity
of finding the bond distribution for a particular graph instance is. This
understanding will have consequences in our knowledge of where BQP is in the
complexity hierarchy, as we will be able to relate the simulatability of
universal quantum circuit families with the complexity of finding bond
distributions. On the other hand, it is possible that the complexity of
finding bond distributions is somehow incorporated in the power of the
quantum circuit that corresponds to the graph instance of the Ising model,
in the sense that the circuit corresponding to a planar graph (under our
mapping) may not be CES, because the effort of obtaining the bond
distribution via Eq. (\ref{conj}) blocks such a simulation.

\begin{acknowledgments}
This material is based upon work supported by the National Science
Foundation (NSF) under Grant No. PHY-0802678, and by the Army Research
Office under grant W911NF-05-1-0440. DAL thanks the Institute for
Quantum Information (IQI) at Caltech where part of this work was
done. IQI is supported by the NSF under Grant No. PHY-0803371.
\end{acknowledgments}

\appendix

\label{sec:app}

\section{Essential elements from Graph Theory}

\label{app:graphs}

Here we review some essential definitions and theorems form graph theory
needed for the results presented in this work. A good reference for these
concepts is the wikipedia article on planar graphs, or Ref. \cite{Graph}.

\begin{mydefinition}
A subgraph $H$ of a given graph $G$ is called a \emph{minor} (or child) of $%
G $ if it is isomorphic to a graph that can be obtained from $G$ via a
sequence of edge deletions, edge contractions, or deletion of isolated
vertices. Edge contraction is the process of removing an edge and combining
its two endpoints into a single vertex. Edge deletion removes an edge
without removing its vertices.
\end{mydefinition}

\begin{mydefinition}
The set of graphs $S$ is \emph{downwardly closed with respect to minor
ordering} if whenever $G$ is a member of $S$, then so is any minor of $G$.
\end{mydefinition}

A trivial consequence of the definition of downwardly closed sets is that
every such set has an obstruction set:

\begin{mycorollary}
An \emph{obstruction set} for $G$ is a set of minors of $G$, also called 
\emph{forbidden minors}, with the property that they prevent downward
closure.
\end{mycorollary}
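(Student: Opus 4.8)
The plan is to make the existence claim precise and then exhibit an obstruction set explicitly. Reading the statement together with its preamble, the content to be proved is: \emph{every set $S$ of finite graphs that is downwardly closed under the minor ordering possesses an obstruction set}, i.e.\ a set $\mathcal{F}$ of forbidden minors such that $G\in S$ if and only if $G$ has no member of $\mathcal{F}$ as a minor. First I would define the candidate obstruction set $\mathcal{F}$ to be the collection of \emph{minor-minimal} graphs lying outside $S$: a graph $F$ belongs to $\mathcal{F}$ precisely when $F\notin S$ yet every proper minor of $F$ lies in $S$.

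The key preliminary observation is that the minor ordering is \emph{well-founded} on finite graphs. Each of the three minor operations (edge deletion, edge contraction, deletion of an isolated vertex) strictly decreases $|V|+|E|$, so there is no infinite strictly descending chain of minors. Consequently any nonempty family of finite graphs admits minor-minimal elements, which guarantees both that $\mathcal{F}$ is well defined and that minimal elements can always be extracted when needed below.

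Next I would verify the characterization $G\in S \iff G$ has no member of $\mathcal{F}$ as a minor. For the forward direction, suppose $G\in S$; if some $F\in\mathcal{F}$ were a minor of $G$, then downward closure of $S$ would force $F\in S$, contradicting $F\notin S$, so $G$ has no forbidden minor. For the reverse direction, suppose $G\notin S$; then the family of minors of $G$ that are not in $S$ is nonempty (it contains $G$ itself), and by well-foundedness I may select a minor-minimal such minor $F$. Every proper minor of $F$ is a minor of $G$ and, by minimality of $F$ within this family, lies in $S$; hence $F\in\mathcal{F}$ and $G$ has $F$ as a minor. The two directions together give the stated equivalence.

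I do not anticipate a genuine obstacle: the statement is essentially definitional, and the only nontrivial ingredient is the well-foundedness of the minor ordering, which secures the existence of the minimal elements used to build $\mathcal{F}$. The one point worth flagging is that this argument establishes only the \emph{existence} of an obstruction set, which may a priori be infinite; its \emph{finiteness} for a particular downwardly closed set such as $\Theta$ (Lemma~\ref{finite-obs}) is a separate and far deeper fact that relies on the Robertson--Seymour theorem rather than on the elementary argument above.
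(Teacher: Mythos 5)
Your proof is correct, but it is worth noting that the paper offers no proof at all for this statement: despite being labeled a corollary, in the paper it functions as a \emph{definition}, introduced by the remark that the existence of an obstruction set is a ``trivial consequence'' of downward closure and glossed only informally afterwards (a minor whose presence violates the property being tested is simply declared forbidden). You instead extract the genuine mathematical content --- every downwardly closed set $S$ of finite graphs admits an obstruction set --- and prove it in the standard way: you take $\mathcal{F}$ to be the minor-minimal graphs outside $S$, justify the existence of minimal elements by well-foundedness of the minor order (each of the three operations strictly decreases $|V|+|E|$), and check both directions of the equivalence, using downward closure for the forward direction and transitivity of the minor relation plus minimality for the reverse. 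Your treatment buys two things the paper's remark does not: first, it repairs the statement's loose phrasing, since an obstruction set is attached to a downwardly closed \emph{family} of graphs rather than to a single graph $G$ as the statement literally reads; second, your closing caveat --- that this elementary argument yields an obstruction set which may a priori be infinite, finiteness being the far deeper Robertson--Seymour content --- is precisely the division of labor the paper itself adopts, deferring finiteness for $\Theta$ to Lemma~\ref{finite-obs} via Theorem~\ref{th:RS}. In short, there is no gap; you have supplied the rigorous argument the paper waves at.
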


In other words, if one constructs a set of minors of $G$ and encounters a
minor that violates the property for which downward closure is being tested,
such a minor is called forbidden, and belongs to the obstruction set. Now
comes a seminal theorem due to Robertson and Seymour \cite{RS}:

\begin{mytheorem}
\label{th:RS}(Robertson--Seymour) Every downwardly closed set of graphs
(possibly infinite) has a finite obstruction set, i.e., a finite set of
forbidden minors.
\end{mytheorem}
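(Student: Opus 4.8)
The plan is to recast the statement in the language of well-quasi-orders and then invoke the structural machinery of the graph minors project. First I would observe that, for a downwardly closed set $S$, the obstruction set is exactly the set of minimal elements (under the minor ordering) of the complement $\bar{S}$: a graph is a forbidden minor precisely when it lies outside $S$ while all of its proper minors lie inside $S$. Since $S$ is downwardly closed, $\bar{S}$ is upwardly closed, so the obstruction set is precisely the antichain of minimal elements of this upper set. The theorem is therefore equivalent to the assertion that the minor ordering on finite graphs admits no infinite antichain. Because passing to a proper minor strictly decreases the number of edges, the minor order is well-founded; hence every element of $\bar{S}$ lies above some minimal element, and the no-infinite-antichain property is exactly the statement that finite graphs are \emph{well-quasi-ordered} under taking minors.

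Next I would reduce the well-quasi-ordering claim to two structural inputs. The first is that, for each fixed surface $\Sigma$, the graphs embeddable in $\Sigma$ are well-quasi-ordered under minors; its easiest special case, graphs of bounded tree-width, I would establish directly by a Kruskal-type argument, encoding a tree-decomposition as a tree labelled by pieces drawn from an already-well-quasi-ordered set and applying the tree theorem. The second input is the \emph{excluded-minor structure theorem}: any graph with no $H$-minor admits a tree-decomposition whose torsos are, up to a bounded number of apex vertices and vortices, embeddable in a surface whose genus is bounded in terms of $H$. I would then run a \emph{minimal bad sequence} argument: assuming an infinite antichain (a ``bad'' sequence) exists, choose one that is minor-minimal at each stage, decompose each of its members via the structure theorem, and derive a contradiction by extracting an infinite good subsequence among the bounded-complexity pieces using the surface and bounded-tree-width cases.

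The hard part --- indeed the overwhelming bulk of the work --- will be the two structural inputs themselves. Proving well-quasi-ordering for graphs on a fixed surface requires the theory of tangles and branch-decompositions together with a delicate induction on genus, while the excluded-minor structure theorem is the deepest single ingredient, controlling exactly how an $H$-minor-free graph can fail to embed in a low-genus surface. Assembling these into the minimal-bad-sequence contradiction is comparatively routine once the ingredients are in hand, but each ingredient is itself a substantial theory, which is why the argument of Robertson and Seymour \cite{RS} spans a long series of papers rather than a single proof. For the purposes of the present work we take Theorem \ref{th:RS} as given and apply it only through its immediate corollary, the finiteness of the obstruction set.
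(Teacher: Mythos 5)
The paper does not actually prove Theorem~\ref{th:RS}: it is imported as a black box from Robertson and Seymour \cite{RS} (the paper's only obligations are Lemma~\ref{graphK}, which establishes downward closure of $\Theta$, and Lemma~\ref{finite-obs}, which is the immediate corollary you mention), so there is no internal proof to compare yours against. Your reduction is correct and standard: for downwardly closed $S$ the obstruction set is precisely the antichain of minimal elements of the upward-closed complement $\bar{S}$, the minor order is well-founded, and finiteness of every obstruction set is therefore equivalent to finite graphs being well-quasi-ordered under minors; your subsequent outline --- a Kruskal-type tree-theorem argument for bounded tree-width, well-quasi-ordering of graphs on a fixed surface, the excluded-minor structure theorem, and a Nash--Williams minimal bad sequence argument (which works because every term after the first in a bad sequence excludes the first term as a minor, so the structure theorem applies to all of them) --- is a faithful summary of the actual Graph Minors program. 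Two remarks. First, a small slip in your well-foundedness argument: deletion of an isolated vertex is a minor operation that does \emph{not} decrease the edge count, so the strictly decreasing quantity should be $|V|+|E|$ rather than $|E|$ alone; this is trivially repaired. Second, be clear about what you have and have not done: the two structural inputs you defer are not lemmas but the overwhelming content of the theorem, spanning the bulk of Robertson and Seymour's series, so what you have written is a correct proof \emph{schema} rather than a proof. Your closing decision to take the theorem as given and use only its corollary is exactly the posture the paper itself adopts, and is the right one here.
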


For our purposes an important example are the planar graphs:

\begin{mydefinition}
A \emph{planar graph} is a graph which can be embedded in the plane, i.e.,
it can be drawn on the plane in such a way that its edges may intersect only
at their endpoints, i.e., edges never cross. A \emph{nonplanar graph} is a
graph which cannot be drawn in the plane without edge intersections.
\end{mydefinition}

There are two particularly important nonplanar graphs, denoted $K_{5}$ (the
complete graph on five vertices; complete means that each pair of vertices
are connected by an edge) and $K_{3,3}$ (the complete bipartite graph on six
vertices, three of which connect to each of the other three). They are
depicted in Fig.~\ref{graphs}.

\begin{figure}[tph]
\centering
\includegraphics[scale=0.4]{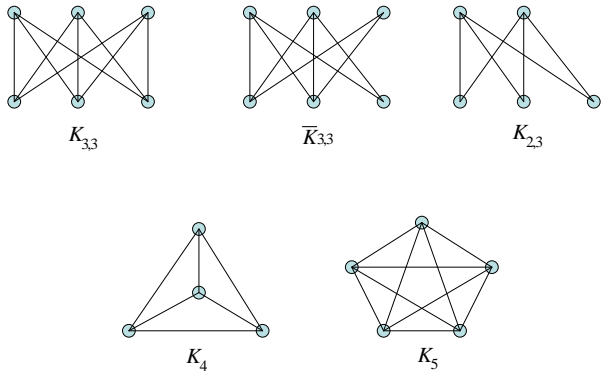} \vspace{-2.5cm}
\caption{The various graphs playing a role in defining the obstruction set
for $\Theta$ and outerplanar graphs.}
\label{graphs}
\end{figure}

Planarity is characterized by Wagner's Theorem \cite{Graph}:

\begin{mytheorem}[Wagner]
\label{th:Wag} A finite graph is planar if and only if it does not have $%
K_{5}$ or $K_{3,3}$ as a minor.
\end{mytheorem}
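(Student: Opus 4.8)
The plan is to prove the two directions separately, treating the forward implication as routine and concentrating the effort on the converse. For the forward direction (planarity implies the absence of $K_5$ and $K_{3,3}$ minors), I would first establish that neither $K_5$ nor $K_{3,3}$ is itself planar. This follows from Euler's formula $|V| - |E| + |F| = 2$: any simple planar graph with $|V| \ge 3$ satisfies $|E| \le 3|V| - 6$, which $K_5$ (having $|E| = 10 > 9$) violates, while a bipartite planar graph satisfies the sharper bound $|E| \le 2|V| - 4$, which $K_{3,3}$ (having $|E| = 9 > 8$) violates. I would then note that planarity is closed under taking minors, since edge deletions, edge contractions, and deletions of isolated vertices can each be performed directly on a plane embedding without creating crossings. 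Hence a planar $G$ cannot contain a nonplanar minor, and the forward implication follows.

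The converse—that a graph with no $K_5$ or $K_{3,3}$ minor is planar—is the substantive content, and I would prove it by induction on $|E|$ after reducing to the $3$-connected case. First I would dispose of graphs that are not $3$-connected by splitting them along separations of size at most two: a graph is planar iff each piece of its decomposition along $1$- and $2$-separations is planar, and a forbidden minor of the whole graph localizes to one such piece, so the inductive hypothesis applies. This leaves the heart of the argument: a $3$-connected graph $G$ on at least five vertices with no $K_5$ or $K_{3,3}$ minor is planar.

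For the $3$-connected case I would invoke the contraction lemma that every $3$-connected graph with $|V| \ge 5$ has an edge $e = uv$ whose contraction $G/e$ remains $3$-connected. Since any $K_5$ or $K_{3,3}$ minor of $G/e$ would lift to one of $G$, the graph $G/e$ has neither, so by induction it admits a planar embedding, which by Whitney's uniqueness theorem for $3$-connected planar graphs is essentially unique and has every face bounded by a cycle. I would then uncontract $e$: the contracted vertex splits back into $u$ and $v$, whose former neighbors must be redistributed around the boundary cycle of the face in which the contracted vertex sat. The key claim is that this can always be done planarly \emph{unless} the cyclic interleaving of the neighbors of $u$ with those of $v$ forces a $K_5$ or a $K_{3,3}$ configuration.

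The main obstacle is precisely this final uncontraction step: it demands a careful case analysis of how the neighbors of $u$ and of $v$ alternate along that boundary cycle, together with a demonstration that every obstruction to a planar re-embedding exhibits one of the two excluded minors. An alternative route that sidesteps the explicit embedding analysis is to derive Wagner's theorem from Kuratowski's theorem, which characterizes planarity by the absence of a \emph{subdivision} of $K_5$ or $K_{3,3}$; there the analogous difficulty migrates into the combinatorial lemma that a graph containing a $K_5$ minor but no $K_5$ subdivision must contain a $K_{3,3}$ subdivision. The point is that $K_5$ has a vertex of degree four, so minor containment and topological containment diverge for it, whereas for the subcubic graph $K_{3,3}$ the two notions coincide.
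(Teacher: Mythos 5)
The paper does not prove this statement at all: Wagner's theorem appears in Appendix~\ref{app:graphs} as classical background, quoted with a citation to \cite{Graph}, and is consumed as a black box in Lemma~\ref{lem:planar} (planarity of $\Theta$ follows once $K_{5}$ and $K_{3,3}$ are shown, by software, to be forbidden minors). So there is no in-paper argument to compare against; what you have written is the standard textbook proof, and as an outline it is the correct one. The forward direction is complete as sketched: the Euler-formula counts ($10>3\cdot 5-6$ for $K_{5}$, $9>2\cdot 6-4$ for the bipartite $K_{3,3}$) and the minor-closure of planarity are exactly right. The converse is also correctly structured: reduce to the $3$-connected case, apply Tutte's contraction lemma to get an edge $e$ with $G/e$ still $3$-connected, embed $G/e$ inductively, and uncontract.

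Two corrections and one caveat. First, your reduction claim \textquotedblleft a graph is planar iff each piece of its decomposition along $1$- and $2$-separations is planar\textquotedblright\ is false as stated for $2$-separations: take $G_{1}=K_{5}$ minus the edge $xy$ and $G_{2}$ an $x$--$y$ path; both pieces are planar, yet their union along $\{x,y\}$ is a subdivision of $K_{5}$. The standard fix is to add a virtual edge $xy$ to each side before invoking induction; each augmented side is a minor of $G$ (the opposite side of a $2$-separation in a $2$-connected graph supplies an $x$--$y$ path), so $K_{5}$/$K_{3,3}$-freeness is inherited, and planar embeddings of the augmented sides with $xy$ on the outer face glue back to an embedding of $G$. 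Second, Whitney's uniqueness theorem is a red herring: you need only that $G/e$ minus the contracted vertex is $2$-connected (immediate from $3$-connectivity of $G/e$), whence in \emph{any} embedding the face that contained the contracted vertex is bounded by a cycle $C$. Finally, the interleaving analysis you defer is the substantive half of the theorem, so the proposal is incomplete until it is carried out; the required dichotomy is that if the neighbors of $v$ on $C$ do not all lie on one arc between consecutive neighbors of $u$, then either $u$ and $v$ have three common neighbors on $C$ (yielding a $K_{5}$ minor) or two neighbors of $v$ strictly cross two neighbors of $u$ on $C$ (yielding a $K_{3,3}$ minor). Your closing remark on the Kuratowski route --- that minor and topological containment coincide for the subcubic $K_{3,3}$ but diverge for $K_{5}$, where a $K_{5}$ minor without a $K_{5}$ subdivision forces a $K_{3,3}$ subdivision --- is accurate and would serve equally well.
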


In other words, if one deletes or contracts the edges of a graph and finds
one of these minors, then the graph is not planar. Hence $K_{5}$ and $%
K_{3,3} $ are forbidden minors (form an obstruction set) for planarity.

For completeness we note that an alternative characterization can be given
in terms of the concept of a subdivision of a graph:

\begin{mydefinition}
A \emph{subdivision} of a graph results from inserting vertices into edges.
\end{mydefinition}

Thus, while deletion and contraction of edges shrinks a graph down,
subdivision builds it up.

\begin{mydefinition}
Two graphs $G$ and $G^{\prime }$ are \emph{homeomorphic} if there is an
isomorphism from some subdivision of $G$ to some subdivision of $G^{\prime }$%
.
\end{mydefinition}

\begin{mytheorem}[Kuratowski]
A finite graph is planar if and only if it has no subgraph homeomorphic to $%
K_{5}$ or $K_{3,3}$.
\end{mytheorem}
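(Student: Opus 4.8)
The plan is to deduce this from Wagner's theorem (Theorem~\ref{th:Wag}), which we already have in hand, by relating the topological notion of ``subgraph homeomorphic to $H$'' to the algebraic notion of ``$H$-minor.'' Recall that $G$ has a subgraph homeomorphic to $H$ precisely when some subdivision of $H$ embeds as a subgraph of $G$; such an $H$ is often called a \emph{topological minor} of $G$. The central observation is that a topological minor is always a minor---one simply contracts each subdivided path back to an edge---so one implication is immediate. Concretely, if $G$ contains a subgraph homeomorphic to $K_{5}$ or $K_{3,3}$, then $G$ has $K_{5}$ or $K_{3,3}$ as a minor, whence $G$ is nonplanar by Theorem~\ref{th:Wag}. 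Taking the contrapositive gives the easy half: planar $\Rightarrow$ no subgraph homeomorphic to $K_{5}$ or $K_{3,3}$.

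For the converse I would prove the equivalence in the other direction, namely that if $G$ contains $K_{5}$ or $K_{3,3}$ as a minor then it already contains one of them as a \emph{topological} minor; combined with Theorem~\ref{th:Wag} this yields nonplanar $\Rightarrow$ homeomorphic subgraph, the contrapositive of the remaining implication. The case of $K_{3,3}$ is the routine one: since every vertex of $K_{3,3}$ has degree $3$, a $K_{3,3}$-minor can be promoted to a subdivision. Explicitly, I would take minimal connected branch sets realizing the minor; each branch set is then a tree carrying only three attachment edges to the opposite side, and in a tree three specified leaves can be joined to a single interior branch vertex by internally disjoint paths, producing the required subdivision.

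The real work---and the step I expect to be the main obstacle---is the $K_{5}$ case, because $K_{5}$ has vertices of degree $4$, and a minor does not a priori supply a single vertex in each branch set from which four internally disjoint paths emanate. Here the plan is to invoke (and, if required, reprove) the classical lemma that a graph containing $K_{5}$ as a minor contains either $K_{5}$ or $K_{3,3}$ as a topological minor. I would argue by taking branch sets $V_{1},\dots,V_{5}$ that are minimal, hence trees, with at least one edge between each pair. If some $V_{i}$ admits a vertex from which four internally disjoint paths reach the four other branch sets, these assemble into a $K_{5}$ subdivision; otherwise the attachment points within the offending tree necessarily split, and one reroutes the paths to exhibit the two colour classes of a $K_{3,3}$ subdivision. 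This structural case analysis---showing that the failure of a degree-$4$ branch vertex forces a bipartite rerouting---is the delicate part, and it is exactly what separates Kuratowski's topological statement from Wagner's minor statement.
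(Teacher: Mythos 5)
The paper offers no proof of this statement at all: Kuratowski's theorem is quoted in Appendix~\ref{app:graphs} as classical background alongside Wagner's theorem (Theorem~\ref{th:Wag}), and in the body only the minor formulation is actually used (via Lemma~\ref{lem:planar}). Your derivation is therefore not paralleled in the paper, but it is a correct and standard route, essentially the textbook reduction of the topological statement to the minor statement (cf.\ Diestel, Proposition~1.7.2). The easy direction via ``topological minor $\Rightarrow$ minor'' is immediate; your $K_{3,3}$ case is right, since minimal branch sets induce trees whose leaves are attachment vertices, a tree with three leaves is a subdivided star, and its centre supplies the three internally disjoint paths; and your $K_{5}$ case correctly identifies the crux, the classical lemma that a $K_{5}$ minor forces a $K_{5}$ or $K_{3,3}$ topological minor. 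Two caveats deserve flagging. First, the logical order: historically Wagner's theorem is usually \emph{deduced from} Kuratowski's, so your argument is non-circular only if Theorem~\ref{th:Wag} is taken as independently established --- consistent with the paper, where both are unproven citations, but worth stating if this were to stand alone. Second, your phrase ``the attachment points within the offending tree necessarily split'' is doing real work: the precise dichotomy is that a tree with four leaves either has a vertex of degree four (yielding the $K_{5}$ subdivision) or contains two distinct vertices of degree three, and in the latter case those two branch vertices, together with suitable segments of the four other branch sets, must be assembled explicitly into the two colour classes of a $K_{3,3}$ subdivision; that rerouting is the one step of your outline that still needs to be written out in full, though it is standard and does go through.
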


In other words, a finite graph is planar if and only if it does not contain
a subgraph that is isomorphic to a subdivision of $K_{5}$ or $K_{3,3}$.

In this work we give a criterion for graph membership (in a certain set $%
\Theta $ which is defined in Definition \ref{def:Theta}) based on the
existence of a solution for the system of linear equations over $GF(2)$
defined by Eq.~(\ref{eq:w}). We want to be guaranteed that this membership
has an ordering in the sense that if $G$ is a member, then so is every minor
of $G$. In other words, we are testing for downward closure. The
Robertson--Seymour theorem guarantees that membership in our set $\Theta $
is not obstructed by an \emph{infinite} set of graphs. However, the
situation is in fact far better: membership of a graph to a fixed downward
closed set can be checked by running a polynomial time algorithm for all
elements of the obstruction set (if it is known), since searching for a
minor on a given graph only requires cubic time \cite{Graph}. In fact,
checking whether a graph is planar can be done in linear time.

For completeness we include the definition of a \emph{hypergraph} as our
correspondence between quantum circuits and graphs is actually a mapping
between circuits and hypergraphs \cite{JOE2}.

\begin{mydefinition}
A hypergraph is a generalization of a graph where edges are replaced by 
\emph{hyperedges}. Let $V=\{v_{1},v_{2},\dots ,v_{k}\}$ be the set of
vertices and let $E=\{e_{1},e_{2},\dots ,e_{n}\}$ be the set of hyperedges.
Each $e_{i}=\{v_{i1},v_{i2},\dots ,v_{im}\}$ is a collection of vertices
where each $v_{ij}\in V$.
\end{mydefinition}

Thus the main difference from ordinary graphs is that edges consist of
arbitrary collections of vertices rather than two and thus graphs are
special cases of hypergraphs. As shown in Ref. \cite{JOE2}, the existence of
hyperedges is what gives us access to the universal gate set presented in 
\cite{Laflamme}, via the two assumptions enumerated at the end of Section %
\ref{sec:mapping}. However, the circuits that correspond to graphs is what
is interesting here, as our results depend on information about the Ising
partition function defined on ordinary graphs.

\section{Proof of Lemmas}

\label{proofs}

Here we present the proof of lemmas \ref{finite-obs}, \ref{graphK}, and \ref%
{quad-form}, which we repeat for convenience:

\noindent \textbf{Lemma \ref{graphK}} \emph{If a graph $G$ is a member of $%
\Theta$, then so is $G\setminus e_{j}$ or $G/e_{j}$, i.e., the deletion or
contraction of an arbitrary edge $e_{j}$ from a graph in $\Theta$ is also in 
$\Theta$.}

\begin{proof}
Assume that $G\in \Theta $. Recall that a graph $G$ is an element of $\Theta 
$ if there exists some solution $w$ to the set of linear equations over $%
GF(2)$ 
\begin{equation}
A^{(G)}w=\alpha ^{(G)}
\end{equation}%
where $A^{(G)}$ is the matrix whose rows are elements, $a_{i}$, of the
nullspace of the incidence matrix of the graph $G$ (given as the Ising
instance), and $\alpha ^{(G)}$ is the vector whose entries are the $a_{i}^{t}%
\mathrm{lwtr}(H^{t}CH)a_{i}$. These null elements, $a_{i}$, correspond to
the even subgraphs of $G$ and will be referred to as cycles (recall
Definition~\ref{def:even}). From elementary linear algebra we know that a
solution exists if $\alpha ^{(G)}$ may be written as a linear combination of
columns of $A^{(G)}$ or in other words if 
\begin{equation}
\mathrm{Rank}[A^{(G)}|\alpha ^{(G)}]=\mathrm{Rank}[A^{(G)}].
\end{equation}%
We must demonstrate that after we either delete or contract an edge, and
arrive at the subgraph $G^{\prime }$, we have 
\begin{equation}
\mathrm{Rank}[A^{(G^{\prime })}|\alpha ^{(G^{\prime })}]=\mathrm{Rank}%
[A^{(G^{\prime })}].
\end{equation}

\begin{widetext}
We shall demonstrate this with an edge deletion as the case of a contraction
is similar. We begin with a given graph $G$ which is a member of $\Gamma _{w}
$. We have 
\begin{equation}
({G^{\prime }}^{t}CG^{\prime })^{(G)}=\left[ 
\begin{array}{cccc}
G_{11}^{\prime } & G_{21}^{\prime } & \cdots  & G_{v1}^{\prime } \\ 
G_{12}^{\prime } & G_{22}^{\prime } & \cdots  & G_{v2}^{\prime } \\ 
\vdots  & \vdots  & \vdots  & \vdots  \\ 
G_{1s}^{\prime } & G_{2s}^{\prime } & \cdots  & G_{vs}^{\prime } \\ 
\vdots  & \vdots  & \vdots  & \vdots  \\ 
G_{1N}^{\prime } & G_{2N}^{\prime } & \cdots  & G_{vN}^{\prime }%
\end{array}%
\right] \left[ 
\begin{array}{cccccc}
G_{21}^{\prime } & G_{22}^{\prime } & \cdots  & G_{2s}^{\prime } & \cdots  & 
G_{2N}^{\prime } \\ 
0 & 0 & \cdots  & 0 & \cdots  & 0 \\ 
G_{41}^{\prime } & G_{42}^{\prime } & \cdots  & G_{4s}^{\prime } & \cdots  & 
G_{4N}^{\prime } \\ 
0 & 0 & \cdots  & 0 & \cdots  & 0 \\ 
\vdots  & \vdots  & \vdots  & \vdots  & \vdots  & \vdots  \\ 
G_{v1}^{\prime } & G_{v2}^{\prime } & \cdots  & G_{vs}^{\prime } & \cdots  & 
G_{vN}^{\prime } \\ 
0 & 0 & \cdots  & 0 & \cdots  & 0%
\end{array}%
\right] 
\end{equation}

Using Einstein notation this equals%
\begin{equation}
\left[ 
\begin{array}{cccccc}
G^{\prime }_{2i-1,1}{G^{\prime }}^{2i,1} & G^{\prime }_{2i-1,1}{G^{\prime }}%
^{2i,2} & \cdots & G^{\prime }_{2i-1,1}{G^{\prime }}^{2i,s} & \cdots & 
G^{\prime }_{2i-1,1}{G^{\prime }}^{2i,N} \\ 
G^{\prime }_{2i-1,2}{G^{\prime }}^{2i,1} & G^{\prime }_{2i-1,2}{G^{\prime }}%
^{2i,2} & \cdots & G^{\prime }_{2i-1,2}{G^{\prime }}^{2i,s} & \cdots & 
G^{\prime }_{2i-1,2}{G^{\prime }}^{2i,N} \\ 
\vdots & \vdots & \vdots & \vdots & \vdots & \vdots \\ 
G^{\prime }_{2i-1,N}{G^{\prime }}^{2i,1} & G^{\prime }_{2i-1,N}{G^{\prime }}%
^{2i,2} & \cdots & G^{\prime }_{2i-1,N}{G^{\prime }}^{2i,s} & \cdots & 
G^{\prime }_{2i-1,N}{G^{\prime }}^{2i,N}%
\end{array}%
\right]  \label{HCH}
\end{equation}
  
Keep in mind that if we were to delete an edge from $G$, this would
correspond to losing a column from $CH$ which would correspond to losing,
say the $s^{\mathrm{th}}$ column from matrix (\ref{HCH}). Taking the lower
triangular portion of matrix (\ref{HCH}) and calculating we find that the $%
m^{\mathrm{th}}$ element of the vector $\alpha ^{(G)}$ is%
\begin{eqnarray}
\alpha _{m}^{(G)} &=&a_{m2}[a_{m1}G_{2i-1,2}^{\prime }{G^{\prime }}%
^{2i,1}]+a_{m3}[a_{m1}G_{2i-1,3}^{\prime }{G^{\prime }}%
^{2i,1}+a_{m2}G_{2i-1,3}^{\prime }{G^{\prime }}^{2i,2}]+\cdots   \notag \\
&+&a_{mk}[a_{m1}G_{2i-1,k}^{\prime }{G^{\prime }}^{2i,1}+\cdots
+a_{m(k-1)}G_{2i-1,k}^{\prime }{G^{\prime }}^{2i,k-1}]+\cdots   \notag \\
&+&a_{mn}[a_{m1}G_{2i-1,N}^{\prime }{G^{\prime }}^{2i,1}+\cdots
+a_{m(N-1)}G_{2i-1,N}^{\prime }{G^{\prime }}^{2i,N-1}]
\end{eqnarray}%
where $a_{mi}$ is the $i^{\mathrm{th}}$ element of the $m^{\mathrm{th}}$
null vector of $CH$ or the matrix element $A_{m,i}$.

Now, let 
\begin{eqnarray}
\xi _{m}^{s} &=&a_{m(s+1)}a_{ms}G_{2i-1,s+1}^{\prime }{G^{\prime }}%
^{2i,s}+a_{m(s+2)}a_{ms}G_{2i-1,s+2}^{\prime }{G^{\prime }}^{2i,s}+\cdots  
+a_{mN}a_{ms}G_{2i-1,N}^{\prime }{G^{\prime }}^{2i,s}.
\end{eqnarray}%
\end{widetext}
This is the portion of $\alpha _{m}^{(G)}$ that would vanish if we were to
omit the edge that corresponds to the $s^{\mathrm{th}}$ column of the matrix
(\ref{HCH}). This means that if we remove this edge, we will end up with the
subgraph $G^{\prime }$ and we can write%
\begin{equation}
\alpha _{m}^{(G)}=\alpha _{m}^{({G^{\prime })}}+\xi _{m}^{s}.
\end{equation}%
This equation is saying that the $m^{\mathrm{th}}$ entry of the right hand
side of 
\begin{equation}
A^{(G)}w=\alpha ^{(G)}
\end{equation}%
is given by the $m^{\mathrm{th}}$ entry of the right hand side of 
\begin{equation}
A^{({G^{\prime })}}w=\alpha ^{({G^{\prime })}}
\end{equation}%
(the corresponding system of equations for the graph $G^{\prime }$) plus the
term $\xi _{m}^{s}$.

From the assumption that $G\in \Gamma _{w}$ and by construction we have 
\begin{equation}
\alpha ^{(G)}=\left[ 
\begin{array}{ccc}
\alpha _{1}^{({G^{\prime })}} & + & \xi _{1}^{s} \\ 
\alpha _{2}^{({G^{\prime })}} & + & \xi _{2}^{s} \\ 
\vdots  & \vdots  & \vdots  \\ 
\alpha _{K}^{({G^{\prime })}} & + & \xi _{K}^{s}%
\end{array}%
\right] =\sum_{i}\delta _{i}c_{i}
\end{equation}%
where the $\delta _{i}$ are coefficients in $GF(2)$ and the $c_{i}$ are
columns of $A^{(G)}$, i.e., $\mathrm{Rank}[A^{(G)}|\alpha ^{(G)}]=\mathrm{%
Rank}[A^{(G)}].$ Thus we have 
\begin{equation}
\alpha ^{({G^{\prime })}}=\sum_{i}\delta _{i}c_{i}-\xi ^{s}.
\end{equation}%
How does the matrix $A$ change as we go from $G\longrightarrow {G^{\prime }}$
by this edge deletion? If the edge is a \emph{dangling edge}, i.e., not part
of a cycle, then we lose a column (column $s$) but if the edge deletion
causes the breaking of $M$ cycles, then $A$ will lose $M$ rows (in addition
to column $s$), as the rows encode the cycle structure of the graph. In this
case, the dimension (or length) of $\alpha ^{(G^{\prime })}$ will be $M$
less than the dimension of $\alpha ^{(G)}$ and the $c_{i}$ will also be
shorter by $M$ entries. We call these shorter $c_{i}$, $c_{i}^{\prime }$.
Further, and most importantly, $\xi ^{s}$ will vanish, as mentioned. After
taking this into consideration we now can conclude that%
\begin{equation}
\alpha ^{(G^{\prime })}=\sum_{i\neq s}\delta _{i}c_{i}^{\prime }
\end{equation}%
where $A^{(G^{\prime })}=[c_{1}^{\prime }c_{2}^{\prime }\cdots
c_{N-1}^{\prime }].$ Thus, 
\begin{equation}
\mathrm{Rank}[A^{(G^{\prime })}|\alpha ^{(G^{\prime })}]=\mathrm{Rank}%
[A^{(G^{\prime })}].
\end{equation}

The proof for edge contractions is similar. The main difference is that an
edge contraction does not cause the loss of a cycle except when the edge in
question belongs to a cycle of length three. Thus in general, the
contraction case is simpler except when dealing with cycles of length three.
In this case the proof caries over in the same way.
\end{proof}

Thus the set of graphs $\Theta$ is downwardly closed.

\noindent \textbf{Lemma \ref{finite-obs}} \emph{The obstruction set for $%
\Theta $ is finite.}

\begin{proof}
The set of graphs $\Theta $ is \emph{downwardly closed} by the above lemma.
One may then apply the Robertson-Seymour Theorem (Theorem \ref{th:RS}) and
immediately conclude that the number of forbidden minors of $\Theta $ is
finite.
\end{proof}

\noindent \textbf{Lemma \ref{quad-form}} \emph{A quadratic form }$x^{t}Ax$ 
\emph{over GF(2)\ is linear in }$x$\emph{\ (equal to }$x^{t}\mathrm{diag}(A)$%
) \emph{iff }$A$\emph{\ is symmetric.}

\begin{proof}
Let $x$ be an $m$-dimensional column vector and $A$ an $m\times m$ matrix,
both over GF(2). Consider the quadratic form $x^{t}Ax=%
\sum_{ij}x_{i}A_{ij}x_{j}$ and assume that $A$ is symmetric:\ $A=A^{t}$.
Then 
\begin{eqnarray}
x^{t}Ax
&=&\sum_{i<j}A_{ij}x_{i}x_{j}+\sum_{i}A_{ii}x_{i}^{2}+%
\sum_{i>j}A_{ij}x_{i}x_{j}  \notag \\
&=&\sum_{i<j}A_{ij}x_{i}x_{j}+\sum_{i}A_{ii}x_{i}+\sum_{j>i}A_{ij}x_{j}x_{i},
\end{eqnarray}%
where in the second line we used $x_{i}^{2}=x_{i}$ [true over GF(2)],
exchanged $i$ and $j$ in the third summand and used $A_{ij}=A_{ji}$. The
first and third summands are equal and hence add up to zero over GF(2). We
are left with the, second, linear term, i.e., $x^{t}Ax=x^{t}\mathrm{diag}(A)$%
, where $\mathrm{diag}(A)$ denotes a vector comprising the diagonal of $A$. 

Next, assume that $A$ is not symmetric. Then there exists a pair of indices $%
i^{\prime }<j^{\prime }$ such that $A_{i^{\prime }j^{\prime }}\neq
A_{j^{\prime }i^{\prime }}$, i.e., $A_{i^{\prime }j^{\prime }}+A_{j^{\prime
}i^{\prime }}=1$. As above, we have: $x^{t}Ax=\sum_{i<j}A_{ij}x_{i}x_{j}+%
\sum_{i}A_{ii}x_{i}+\sum_{j>i}A_{ji}x_{j}x_{i}$. Consider the index pair $%
(i^{\prime },j^{\prime })$ in this sum:\ 
\begin{equation}
A_{i^{\prime }j^{\prime }}x_{i^{\prime }}x_{j^{\prime }}+A_{j^{\prime
}i^{\prime }}x_{j^{\prime }}x_{i^{\prime }}=(A_{i^{\prime }j^{\prime
}}+A_{j^{\prime }i^{\prime }})x_{i^{\prime }}x_{j^{\prime }}=x_{i^{\prime
}}x_{j^{\prime }}.
\end{equation}%
Thus the quadratic form contains at least one non-linear (quadratic) term $%
x_{i^{\prime }}x_{j^{\prime }}$.
\end{proof}

\section{Algorithm for Minor Testing}

\label{app:algo}

Note that all calculations are done modulo 2.

\textbf{Input:} A graph $G$ for which we wish to determine if there exists
some satisfying edge interaction $w$ that satisfies Eq.~(\ref{eq:w}).
Specifically, we considered $G=K_{3,3}$, $K_{5}$, $\bar{K}_{3,3}$ ($K_{3,3}$
with one edge deleted) and $K_{4}$.

\textbf{Output:} A binary vector $w$ that is a satisfying bond distribution
or a null vector (indicating no such bond distribution).

\begin{enumerate}
\item From the incidence matrix $A$ of $G$ obtain the following items:

\begin{enumerate}
\item All vectors $a_{i}$ belonging to the null space $\mathcal{L}$ of the
incidence matrix. These row vectors form a matrix $M$.

\item Construct a matrix representation $H$ of the possible corresponding
quantum circuits (under the mapping presented earlier). From $H$ construct $%
Q=\mathrm{lwtr}H^{t}CH$. This matrix will have variables $z_{i}$
corresponding to all the possible ways that one can include or omit $Z$
operations (changing these affects the types of edge interactions that one
obtains, if any.)
\end{enumerate}

\item Form the vector $B$ whose $i^{\mathrm{th}}$ entry is $%
B_{i}=a_{i}^{t}Qa_{i}$, where $a_{i}$ are the elements of the null space of $%
\mathcal{L}$. This is the left-hand side of Eq.~(\ref{eq:w}). Each entry of $%
B$ consists of linear equations whose variables $z_{i}$ represent the
presence or absence of a $Z$ operation in a quantum circuit that corresponds
to $G$.

\item Form a matrix $W$ whose rows are all possible bonds.

\item Produce a matrix $D$ whose $i^{\mathrm{th}}$ row $D_{i}$ is equal to $%
MW_{i}$. These are all possible values of the right hand side of Eq.~(\ref%
{eq:w}). (Note that due to symmetry there will be many repeats, so that the
total number of possible bonds to check is far fewer than all possible
bonds.)

\item Attempt to solve the system of linear equations $%
B_{k}=D_{1,k},B_{k}=D_{2,k},\dots ,B_{k}=D_{|L|,k}$ over $GF(2)$ (where $k$
runs from $1$ to the number of rows of $D$) for the variables $z_{i}$. A
solution for some $k$ gives information for a specific circuit
representation $H$. If no solution for the $z_{i}$ exists, then there is no
satisfying bond distribution $w$ for Eq.~(\ref{eq:w}). This is indeed the
case for $K_{3,3}$, $K_{5}$, $\bar{K}_{3,3}$ and $K_{4}$. If there is a
solution for some fixed $k$, continue.

\item Take this specific $H$ (i.e., this $H$ has no variables and
corresponds to a specific circuit given by the solution of the $z_{i}$
above) and now form $B_{i}=a_{i}^{t}(\mathrm{lwtr}H^{t}CH)a_{i}$ again. This
time however, $B$ contains no variables and is a numerical vector. Thus, one
now has the binary vector $B$ and the matrix $M$.

\item Solve the linear system $Mw=B$ and output the edge interaction $w$.
\end{enumerate}

Having applied this algorithm we proved using mathematical software that the
obstruction set for $\Theta $ includes $K_{3,3}$, $K_{5}$, $\bar{K}_{3,3}$
and $K_{4}$, i.e., Lemma \ref{lem:obs}.


\begin{thebibliography}{99}
\bibitem{Feynman1} {R.P. Feynman}, Intl. J. Theor. Phys. \textbf{21}, 467
(1982).

\bibitem{Lloyd:96} {S. Lloyd}, Science \textbf{273}, 1073 (1996).

\bibitem{Wiesner:96} {S. Wiesner}, {Simulations of Many-Body Quantum Systems
by a Quantum Computer}, eprint quant-ph/9603028.

\bibitem{Meyer:97} {D.A. Meyer}, Phys. Rev. E \textbf{55}, 5261 (1997).

\bibitem{Boghosian:97a} {B.M. Boghosian and W. Taylor}, Phys. Rev. E \textbf{%
57}, 54 (1998).

\bibitem{Abrams:97} {D.S. Abrams and S. Lloyd}, Phys. Rev. Lett. \textbf{79}%
, 2586 (1997).

\bibitem{Zalka:98} {C. Zalka}, Proc. Roy. Soc. London Ser. A \textbf{454},
313 (1998).

\bibitem{Lidar:98RC} {D. A. Lidar and H. Wang}, Phys. Rev. E \textbf{59},
2429 (1999).

\bibitem{Ortiz:00} {G. Ortiz, J.E. Gubernatis, E. Knill, and R. Laflamme},
Phys. Rev. A \textbf{64}, 022319 (2001).

\bibitem{Terhal:00} {B.M. Terhal and D.P. DiVincenzo}, Phys. Rev. A \textbf{%
61}, 022301 (2000).

\bibitem{Freedman:00} {M.H. Freedman, A. Kitaev and Z. Wang}, {Commun. Math.
Phys.} \textbf{227}, 587 (2002).

\bibitem{WuByrdLidar:01} {L. -A. Wu, M. S. Byrd, and D. A. Lidar}, Phys.
Rev. Lett. \textbf{89}, 057904 (2002).

\bibitem{Aspuru-Guzik:05} {A. Aspuru-Guzik, A.D. Dutoi, P.J. Love, and M.
Head-Gordon}, Science \textbf{309}, 1704 (2005).

\bibitem{Cirac:02} {E. Jan\'{e}, G. Vidal, W. D\"{u}r, P. Zoller, and J. I.
Cirac}, Quantum Inf. Comput. \textbf{3}, 15 (2003).

\bibitem{Cirac:08} {N. Schuch, M. Wolf, F. Verstraete, and J. I. Cirac},
Phys. Rev. Lett. \textbf{\ 100}, 040501 (2008).

\bibitem{Lidar:PRE97a} {D. A. Lidar and O. Biham}, Phys. Rev. E \textbf{56},
3661 (1997).

\bibitem{Yepez:01} {J. Yepez}, Phys. Rev. E \textbf{63}, 046702 (2001).

\bibitem{Meyer:02} {D.A. Meyer}, Proc. Roy. Soc. London Ser. A \textbf{360},
395 (2002).

\bibitem{Georgeot:01a} {B. Georgeot and D.L. Shepelyansky}, Phys. Rev. Lett. 
\textbf{86}, 5393 (2001).

\bibitem{Georgeot:01b} {B. Georgeot and D.L. Shepelyansky}, Phys. Rev. Lett. 
\textbf{86}, 2890 (2001).

\bibitem{Terraneo:03} {M. Terraneo, B. Georgeot, D.L. Shepelyansky}, Eur.
Phys. J. D \textbf{22}, 127 (2003).

\bibitem{Lidar:04} {D. A. Lidar}, New J. Phys. \textbf{6}, 167 (2004).

\bibitem{dorit-tutte} {D. Aharonov, I. Arad, E. Eban, and Z. Landau}, eprint
quant-ph/0702008.

\bibitem{JOE} {J. Geraci and D. A. Lidar}, Commun. Math. Phys. \textbf{279},
735 (2008).

\bibitem{Bravyi:07} {S. Bravyi and R. Raussendorf}, Phys. Rev. A \textbf{76}%
, 022304 (2007).

\bibitem{aspuru-guzik1} {I. Kassal, S. P. Jordan, P.J. Love, M. Mohseni, and
A. Aspuru-Guzik}, eprint arXiv:0801.2986.

\bibitem{aspuru-guzik2} {A. Perdomo, C. Truncik, I. Tubert-Brohman, G. Rose,
and A. Aspuru-Guzik}, eprint arXiv:0801.3625.

\bibitem{JOE2} {J. Geraci}, Quant. Inf. Proc. \textbf{7}, 227 (2008).

\bibitem{Nielsen:book} {M.A. Nielsen and I.L. Chuang}, \emph{Quantum
Computation and Quantum Information} ({Cambridge University Press},
Cambridge, UK, 2000).

\bibitem{anders:022334} S. Anders and H.~J. Briegel, Phys. Rev. A \textbf{73}%
, 022334 (2006).

\bibitem{valiant} {L.G. Valiant}, {SIAM J. on Computing} \textbf{31}, 1229
(2002).

\bibitem{Knill:01} {E. Knill}, eprint quant-ph/0108033.

\bibitem{Terhal:01} B.~M. Terhal and D.~P. DiVincenzo, Phys. Rev. A \textbf{%
65}, 032325 (2002).

\bibitem{Terhal:05} D.~P. DiVincenzo and B.~M. Terhal, Foundations of
Physics \textbf{35}, 1967 (2005).

\bibitem{Bravyi:05} S. Bravyi, Quantum Inf. Comput. \textbf{5}, 216 (2005).

\bibitem{jozsa-2008} R. Jozsa and A. Miyake, eprint arXiv:0804.4050.

\bibitem{somma:190501} {R. Somma, H. Barnum, G. Ortiz, and E. Knill}, Phys.
Rev. Lett. \textbf{97}, 190501 (2006).

\bibitem{vidal} {G. Vidal}, Phys. Rev. Lett. \textbf{91}, 147902 (2003).

\bibitem{Markov:tensor} {I. L. Markov and Y. Shi}, eprint quant-ph/0511069.

\bibitem{yoran:170503} N. Yoran and A.~J. Short, Phys. Rev. Lett. \textbf{96}%
, 170503 (2006).

\bibitem{Bravyi:08} S. Bravyi, eprint arXiv.org:0801.2989.

\bibitem{Nest:08} {M. Van den Nest, W. D\"{u}r, R. Raussendorf, and H. J.
Briegel}, eprint arXiv:0805.1214.

\bibitem{nest:117207} {M. Van den Nest, W. D\"{u}r, H. J. Briegel}, Phys.
Rev. Lett. \textbf{98}, 117207 (2007).

\bibitem{nest:110501} {M. Van den Nest, W. D\"{u}r, and H. J. Briegel},
Phys. Rev. Lett. \textbf{100}, 110501 (2008).

\bibitem{Cuevas:08} {G. De las Cuevas, W. D\"{u}r, M. Van den Nest, and H.J.
Briegel}, eprint arXiv:0812.2368.

\bibitem{Parisi:book} {M. Mezard, G. Parisi and M.A. Virasoro}, \emph{Spin
Glass Theory and Beyond}, \emph{World Scientific Lecture Notes in Physics} ({%
World Scientific}, {Singapore}, 1987).

\bibitem{Barahona} {F. Barahona}, J. Phys. A \textbf{15}, 3241 (1982).

\bibitem{Reichl:book} {L.E. Reichl}, \emph{A Modern Course in Statistical
Physics} ({John Wiley \& Sons}, {New York}, 1998).

\bibitem{Laflamme} {E. Knill and R. Laflamme}, Inf. Proc. Lett. \textbf{79},
173 (2001).

\bibitem{Welsh:book} D.~J.~A. Welsh, \emph{Complexity: Knots, Colourings and
Counting}, \emph{London Mathematical Society Lecture Note Series 186}
(Cambridge University Press, {London}, 1993).

\bibitem{Kitaev:96} {A.Yu. Kitaev}, Russian Math. Surveys \textbf{52}, 1191
(1996).

\bibitem{Adleman:97} {L. Adleman, J. DeMarris, and M. Huang}, {SIAM J. on
Computing} \textbf{26}, 1524 (1997).

\bibitem{Bernstein:93} {E. Bernstein and U. Vazirani}, in \emph{Proceedings
of the 25th Annual ACM Symposium on Theory of Computing} ({ACM}, {New York,
NY}, 1993), p.\ 11.

\bibitem{jordan} {P. W. Shor and S. P. Jordan}, eprint quant-ph/0707.2831.

\bibitem{z2} {C. M. Dawson, H. L. Haselgrove, A. P. Hines, D.Mortimer, M. A.
Nielsen, and T. J. Osborne}, Quantum Inf. Comput. \textbf{5}, 102 (2005).

\bibitem{RS} {N. Robertson and P. D. Seymour}, {J. Combinatorial Theory} 
\textbf{9}, (2004).

\bibitem{Graph} {J. Gross and J. Yellen}, \emph{Graph Theory and its
Applications}, \emph{\ {Discrete mathematics and its applications}} ({CRC
Press}, {USA}, 1999).

\bibitem{K} {P. W. Kasteleyn}, \emph{Graph Theory and Crystal Physics --
Graph Theory and Theoretical Physics} ({London Academic Press}, {London},
1967).

\bibitem{Jozsa:08} {R. Jozsa}, in \emph{Essays in Memory of Thomas Beth},
Vol.~5393 of \emph{\ {Lecture Notes in Computer Science}}, edited by {J
Calmet, W. Geiselmann, J. Mueller-Quade} ({Springer}, {Berlin}, 2008), p.\
43.

\bibitem{Wiegers:06} {M. Wiegers}, in \emph{Graph-Theoretic Concepts in
Computer Science}, Vol.~246 of \emph{Lecture Notes in Computer Science},
edited by {G. Goos and J. Hartmanis} ({Springer}, {Berlin}, 2006), p.\ 165.

\bibitem{Chartrand:67} {G. Chartrand and and F. Harary}, {Ann. Inst. Henri
Poincare} \textbf{III 4}, 433 (1967).

\bibitem{Jerrum:90} {M.R. Jerrum, A. Sinclair}, {Proc. 17th ICALP, EATCS}
462 (1990).
\end{thebibliography}

\end{document}